\documentclass{scrartcl}

\usepackage{amsmath}
\usepackage{amssymb}
\usepackage{amsthm}
\usepackage{xcolor}
\usepackage{xspace}
\usepackage{prettyref}
\usepackage{makeidx}
\usepackage{nomencl}
\usepackage{stmaryrd}
\usepackage{multicol}

\usepackage{tikz}
\usetikzlibrary{trees}
\usetikzlibrary{arrows}

\usepackage{caption}


\usepackage{enumerate}

\usepackage[citecolor=blue,final]{hyperref}




\newtheorem{theorem}{Theorem}[section]
\newtheorem{proposition}[theorem]{Proposition}
\newtheorem{lemma}[theorem]{Lemma}
\newtheorem{corollary}[theorem]{Corollary}
\theoremstyle{definition}

\newtheorem{expl}[theorem]{Example}

\newenvironment{example}{\begin{expl}%
		\pushQED{\qed}}%
	{\popQED\end{expl}}
\newtheorem{defi}[theorem]{Definition}

	{\popQED\end{defi}}

\newtheorem{rema}[theorem]{Remark}
	{\popQED\end{rema}}

\newcommand{\prref}[1]{\prettyref{#1}}
\newrefformat{alg}{Algorithm~\ref{#1}}
\newrefformat{thm}{Theorem~\ref{#1}}
\newrefformat{lem}{Lemma~\ref{#1}}
\newrefformat{def}{Definition~\ref{#1}}
\newrefformat{cor}{Corollary~\ref{#1}}
\newrefformat{prop}{Proposition~\ref{#1}}
\newrefformat{clm}{Claim~\ref{#1}}
\newrefformat{sec}{Section~\ref{#1}}
\newrefformat{subsec}{Subsection~\ref{#1}}
\newrefformat{kap}{Chapter~\ref{#1}}
\newrefformat{ex}{Example~\ref{#1}}
\newrefformat{rem}{Remark~\ref{#1}}
\newrefformat{fig}{Figure~\ref{#1}}
\newrefformat{app}{Appendix~\ref{#1}}
\newrefformat{eq}{Equation~\eqref{#1}}
\newrefformat{tab}{Table~\ref{#1}}
\newrefformat{ap}{{\sc Appendix}}


\newcommand{\abs}[1]{\left|\mathinner{#1}\right|}
\newcommand{\Abs}[1]{\left\lVert\mathinner{#1}\right\rVert}





\newcommand{\set}[2]{\left\{#1\mathrel{\left|\vphantom{#1}\vphantom{#2}\right.}#2\right\}}
\newcommand{\oneset}[1]{\left\{\mathinner{#1}\right\}}
\newcommand\os{\oneset}


\newcommand{\ms}{\ensuremath{\mathrm{ms}}}


\newcommand{\svarietyfont}[1]{\ensuremath{\mathbf{#1}}\xspace}

\newcommand{\varietyfont}[1]{\ensuremath{\mathbf{#1}}\xspace}

\newcommand{\varietyG}{\varietyfont{G}}
\newcommand{\varietyH}{\varietyfont{H}}
\newcommand{\varietyHline}{\overline{\varietyfont{H}}}

\newcommand{\Gcom}{\ensuremath{\varietyfont{Ab}}}

\newcommand{\mediumSize}[1]{\fontsize{9pt}{12pt}\selectfont #1\normalsize}
\newcommand{\mediumFont}[1]{\normalfont\mediumSize{#1}}
\newcommand{\malcev}%
{\mathop{\text{\normalsize{\raisebox{0.3mm}{\textcircled{\raisebox{0.1mm}{\mediumFont{m}}}}}}}}
\newcommand{\Factors}{\mathrm{Factors}}

\newcommand{\ord}{\mathop{\mathrm{ord}}}

\newcommand{\IRR}{\mathrm{IRR}}

\newcommand{\medsmash}[1]{#1}

\newcommand\RAS[2]{\medsmash{\overset{#1}{\underset{#2}{\Longrightarrow}}}}

\newcommand\LAS[2]{\medsmash{\overset{#1}{\underset{#2}{\Longleftarrow}}}}
\newcommand\DAS[2]{\medsmash{\overset{#1}{\underset{#2}{\Longleftrightarrow}}}}

\newcommand\RA[1]{\medsmash{\underset{#1}{\Longrightarrow}}}
\newcommand\LA[1]{\medsmash{\underset{#1}{\Longleftarrow}}}

\newcommand{\kursivdef}[2]{\emph{#1}\index{#2}}
\newcommand{\blinddef}[1]{\index{#1}}

\begin{document}
	\title{Parikh-reducing Church-Rosser representations for some classes of regular languages}
	\author{Tobias Walter\footnote{Supported by the German Research Foundation (DFG) under grant DI 435/6-1.}\\FMI, University of Stuttgart}
	
	\maketitle	
	\begin{abstract}
		In this paper the concept of Parikh-reducing Church-Rosser systems is studied. It is shown that for two classes of regular languages there exist such systems which describe the languages using finitely many equivalence classes of the rewriting system. The two classes are: 1.) the class of all regular languages such that the syntactic monoid contains only abelian groups and 2.) the class of all group languages over a two-letter alphabet.
		The construction of the systems yield a monoid representation such that all subgroups are abelian.
		 Additionally, the complexity of those representations is studied.
	\end{abstract}
	\section{Introduction}
	The class of Church-Rosser congruential languages has been introduced by Narendran, McNaughton and Otto in 1988, see \cite{Narendran84phd,McNaughtonNO88}. A language is Church-Rosser congruential if it is a finite union of equivalence classes of a finite length-reducing Church-Rosser rewriting system. 
	It is natural to ask whether every regular language is Church-Rosser congruential. After some initial progress \cite{Niemann00CRCL, NiemannW02, reinhardtT03, DiekertKW12tcs}, this question has been solved affirmatively, see \cite{DiekertKRW15jacm}.
	The main idea of the solution in \cite{DiekertKRW15jacm} is to prove a stronger statement. Instead of proving that for every regular language there exists a length-reducing Church-Rosser system which saturates the language it is proved that for every regular language and every weight function there exists such a weight-reducing Church-Rosser system. In particular, the initial problem is included by choosing length as the weight function.
	This result on regular languages became possible by utilizing the concept of local divisors. In this paper we use the same technique of local divisors to study a stronger property. Instead of requiring weight-reducing systems for a given weight we ask the question whether for every regular language there exists a Church-Rosser system which saturates the language and is weight-reducing for every weight function. We call such a rewriting system a Parikh-reducing Church-Rosser system. 
	Some of the initial progress already satisfied the Parikh-reducing condition, namely the construction for aperiodic languages \cite{DiekertKW12tcs}, for languages of polynomial density \cite{Niemann00CRCL} and for cyclic groups of order two \cite{NiemannW02}. Our result comprises these results. Namely, the following is the main result: 
	for every language such that its syntactic monoid contains only abelian groups there exists a Parikh-reducing Church-Rosser system which saturates the language. 
	Moreover, all groups appearing in the corresponding Church-Rosser representation are abelian.
	Furthermore, we show the existence of such Parikh-reducing systems for all group languages over a two-letter alphabet.
	Having established the existence of Parikh-reducing systems we study the size of the resulting Church-Rosser representations. Naively, analyzing the construction yields a non-primitive function for this size.
	We introduce an alphabet reduction technique which reduce the size of the resulting Church-Rosser representations to a quadruple exponential function. On the other side of the spectrum we prove an exponential lower bound for cyclic groups.
	\section{Preliminaries}
	\paragraph*{Words and Languages}
	An \kursivdef{alphabet}{alphabet} is a non-empty finite set $A$. 
	An element of $a\in A$ is called a \kursivdef{letter}{letter}.
	A (finite) \kursivdef{word}{word} $w = a_1\cdots a_n$ is a finite concatenation of letters $a_1,\ldots, a_n \in A$. 
	The set of finite words with letters in $A$ is denoted by $A^*$. The \kursivdef{empty word}{word!empty} is denoted by $1$. 
	The set of finite words $A^*$ forms a monoid with the concatenation operation, the \kursivdef{free monoid}{free monoid}. 
    Let $\Abs{\cdot} : A \to \mathbb N$ be a function with $\Abs{a} > 0$ for all $a\in A$. The unique homomorphism, which extends $\Abs{\cdot}$, is also denoted by $\Abs{\cdot}$ and called a \kursivdef{weight}{weight}. 
	A special weight is \kursivdef{length}{length} $\abs{\cdot} : A^* \to \mathbb N$ which is induced by $\abs{a} = 1$ for all $a\in A$. 
	For a letter $c\in A$ we also define $\abs{\cdot}_c : A^* \to \mathbb N$ to be the homomorphism which is induced by \[
	\abs{a}_c = \begin{cases}
	1 & \text{ if } a=c\\ 0 & \text{ else.}
	\end{cases}
	\]
	We set $A^{\leq n} = \set{w\in A^*}{\abs{w} \leq n}$ to be the set of words of length at most~$n$.
	
	A \kursivdef{language}{language} $L$ is a subset of $A^*$. 
	Let $\varphi : A^* \to M$ be a homomorphism in a finite monoid $M$.
	A language $L \subseteq A^*$ is \emph{recognized by $\varphi$} if $L = \varphi^{-1}(\varphi(L))$.
	A language $L$ is regular if it can be recognized by some homomorphism in a finite monoid. 
	\paragraph*{Algebra}
	We want to study subclasses of regular languages which are characterized by special classes of monoids.
	A \kursivdef{variety}{variety} $\varietyfont{V}$ is a class of finite monoids which is closed under taking submonoids, homomorphic images and finite direct products.
	In particular, taking the empty direct product, every variety contains the trivial monoid.
	A variety which contains only groups is called a \kursivdef{variety of groups}{variety!groups}.
	We assign every variety $\varietyfont{V}$ a corresponding language class $\svarietyfont{V}(A^*)$ such that $L \in \svarietyfont{V}(A^*)$ if and only if there exists a monoid $M \in \varietyfont{V}$ and a homomorphism $\varphi : A^* \to M$ that recognizes $L$.
	Examples of such varieties include 
		the variety $\varietyG$ of all groups and the variety $\Gcom$ of all abelian groups.

	Let $\varietyH$ be a variety of finite groups. We define \[
	\varietyHline = \set{M}{\text{every group in } M \text{ is in } \varietyH}
	\]
	to be the maximal class of monoids whose subsemigroups, which are groups, are in $\varietyH$. It turns out that $\varietyHline$ is the maximal variety such that $\overline{\varietyH} \cap \varietyG =
	\varietyH$, see \cite[Proposition V.10.4]{eil76}. Our main result is concerned with the language class $\overline \Gcom (A^*)$.
	An important concept used in this paper are local divisors. 
	Let $M$ be a monoid and $c\in M$. We set $M_c = cM \cap Mc$ and introduce a multiplication $\circ$ on $M_c$ given by
	$uc \circ cv = ucv$.
	Since $uc \in cM$ and $cv \in Mc$, the result of $uc\circ cv$ is in $M_c$. The structure $(M_c,\circ,c)$ forms a monoid, the \kursivdef{local divisor}{local divisor} of $M$ at $c$.
    Indeed, $M_c$ is a divisor of $M$, that is, a homomorphic image of a submonoid of $M$, see \cite{DiekertK2015tcs}. 
	If $c \in M$  is not a unit, then $\abs{M_c} < \abs{M}$ since $1 \not\in cM \cap Mc$.
	
	\paragraph*{Combinatorics on Words}
Let $x = uvw \in A^*$ be a word. 
Then we call $u$ a \kursivdef{prefix}{prefix}, $v$ a \kursivdef{factor}{factor} and $w$ a \kursivdef{suffix}{suffix} of $x$. 
The factor $v$ is \kursivdef{proper}{factor!proper} if $u$ and $w$ are not empty.
The set of factors is given by $\Factors(w) = \set{u}{u \text{ is a factor of } w}$. 
The word $a_1\cdots a_n$, with $a_i \in A$, is a \kursivdef{subword}{subword} of a word~$u$ if $u \in A^* a_1 A^* \cdots A^* a_n A^*$. 
The word $u$ is a power of the word $v$ if $u=v^i$ for some $i\in \mathbb N$. 	
Let $w = a_1\cdots a_n \in A^*$ be a word with $a_i\in A$ letters. We say that $p\in \mathbb N$ is a \kursivdef{period}{period} of $w$ if $a_i = a_{i+p}$ for all $1\leq i \leq n-p$. 
The theorem of Fine and Wilf\blinddef{theorem!Fine and Wilf} describes an important property of periods. 
\begin{theorem}[Fine and Wilf, \cite{FineWilf65}]\label{thm:finewilf}
	Let $p,q$ be periods of some word $w$. If $\abs{w} \geq p+q-\gcd(p,q)$, then $\gcd(p,q)$ is a period of $w$.
\end{theorem}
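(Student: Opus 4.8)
The plan is to prove the Fine--Wilf theorem by a clean argument on the equivalence relation that the two periods induce on position indices. Write $w = a_1 \cdots a_n$ with $n = \abs{w} \geq p + q - \gcd(p,q)$, and set $d = \gcd(p,q)$. I want to show $a_i = a_{i+d}$ whenever $1 \leq i \leq n-d$, i.e.\ that $d$ is a period. Having $p$ and $q$ as periods means $a_i = a_j$ whenever $i \equiv j \pmod p$ and both indices lie in a range where the period relation applies, and similarly for $q$; the whole difficulty is that these ``local'' period equalities only connect indices whose difference is $\pm p$ (resp.\ $\pm q$) \emph{and} which both fall inside $\{1,\dots,n\}$. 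So the real content is a reachability statement: under the length hypothesis, any two positions congruent modulo $d$ can be linked by a chain of single steps of size $\pm p$ or $\pm q$ that never leaves the index range $\{1,\dots,n\}$.

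First I would reduce to the case $\gcd(p,q) = 1$, since replacing $d$ by $1$ is only a relabeling once we argue on a single residue class; more precisely, it suffices to show that any two positions in $\{1,\dots,n\}$ that are congruent modulo $d$ are connected by such a chain. Then I would set up the combinatorial core as follows: define a graph $G$ on vertex set $\{1,\dots,n\}$ with an edge between $i$ and $j$ exactly when $\abs{i-j} \in \{p,q\}$. The claim ``$d$ is a period'' follows once we show that two vertices are in the same connected component of $G$ precisely when they are congruent modulo $d$, because each edge of $G$ forces equality of the two corresponding letters (the step of size $p$ is a legal application of the $p$-period relation, and likewise for $q$), and equalities propagate along paths.

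The key lemma, and the step I expect to be the main obstacle, is the reachability claim: if $0 \leq i, j < n$ with $i \equiv j \pmod d$, then $i$ and $j$ lie in one component of $G$, using the hypothesis $n \geq p + q - d$. The natural approach is induction on $\max(i,j)$, or equivalently a descent argument: given a position, one can always move toward a canonical representative of its residue class by adding $q$ when the current value is small and subtracting $p$ when it is large, and the bound $n \geq p+q-d$ is exactly what guarantees that at each step at least one of these two moves stays inside $\{0,\dots,n-1\}$. One verifies that a position in $\{0,\dots,p-1\}$ can reach $0$ by repeatedly subtracting $p$ after adding $q$ the appropriate number of times; the B\'ezout relation $d = \alpha p + \beta q$ shows the residue classes are exactly reachable, and the length bound is what makes the intermediate values admissible. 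I would carry this out by showing concretely that for any $i$ with $0 \le i < n - q$ the vertex $i$ is adjacent to $i+q$, and for any $i$ with $p \le i < n$ the vertex $i$ is adjacent to $i-p$, and then checking that these two moves suffice to drive every vertex down into the block $\{0,\dots,d-1\}$ while remaining in range, which pins down the components and completes the proof.
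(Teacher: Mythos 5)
The paper itself offers no proof of this statement: it is quoted as a classical theorem with a citation to Fine and Wilf, so your proposal has to be judged on its own merits rather than against a proof in the paper. Your overall architecture is sound: since $p$ and $q$ are both multiples of $d=\gcd(p,q)$, the period relations never mix residue classes modulo $d$, each class is an instance of the coprime case with periods $p/d$, $q/d$ and at least $p/d+q/d-1$ positions, and the theorem is indeed equivalent to the connectivity of your graph $G$ on each residue class. The gap is in the step you yourself flagged as the main obstacle, and it occurs exactly at the tight bound, which is the whole content of the theorem.

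Your claim that the bound $n\geq p+q-d$ is ``exactly what guarantees that at each step at least one of these two moves stays inside'' the index range is false. With $1$-indexed positions, the positions from which neither adding $q$ nor subtracting $p$ is legal are those $i$ with $n-q<i\leq p$; this set is nonempty precisely when $n\leq p+q-1$, so stuck positions exist in every case $p+q-d\leq n\leq p+q-1$ allowed by the hypothesis, and in the critical case $n=p+q-d$ there are exactly $d$ of them, one in each residue class. Concretely, take $p=2$, $q=3$, $d=1$, $n=4$: from position $2$, adding $q$ gives $5$ and subtracting $p$ gives $0$, both out of range, so your descent walk cannot move, even though $2$ is joined to the rest of the graph (via the edge $\{2,4\}$, a ``$+p$'' step your walk does not allow). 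For the same reason your claimed target is wrong: the forced walks do not end in the block $\{1,\dots,d\}$ but at the stuck position; in the example the walk from $1$ is $1\to 4\to 2$ and from $3$ it is $3\to 1\to 4\to 2$, so even position $1$ leaves the block and never returns. The argument is repairable, but by a statement different from the one you made: in the coprime tight case the $p+q-1$ positions have pairwise distinct residues modulo $p+q$, both of your moves act as $i\mapsto i+q \bmod (p+q)$, at every non-stuck vertex exactly one of the two moves is legal, and hence every walk visits pairwise distinct vertices and must terminate at the unique stuck vertex of its class; since all walks of a class end at that same vertex, the class is connected. (Equivalently: adding $q$ generates a $(p+q)$-cycle on $\mathbb Z/(p+q)\mathbb Z$, your vertex set is that cycle with one vertex deleted, and a cycle minus a vertex is a path.) Longer words then follow by the trivial induction that attaches vertex $n+1$ to vertex $n+1-p$. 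As written, however, the proposal asserts a false availability claim precisely where the theorem has its content, so it does not constitute a proof.
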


A word $u$ is called \kursivdef{primitive}{primitive}\blinddef{word!primitive} if it is only a power of itself, that is, if $u=v^i$ with $i\geq 1$ implies $i=1$. The following well-known characterization of primitive words will be useful.
\begin{lemma}\label{lem:charprimitive}
	A word $u\in A^*$ is primitive if and only if $u$ is not a proper factor of $u^2$.
\end{lemma}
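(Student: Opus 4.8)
The plan is to prove both directions of the biconditional for the characterization of primitive words in \prref{lem:charprimitive}.

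\medskip

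\noindent\textbf{Proof sketch.} First I would dispose of the easy direction by contraposition: suppose $u$ is \emph{not} primitive, so $u = v^i$ for some $i \geq 2$ and some word $v$. Then $u^2 = v^{2i}$, and since $2i \geq i+2 > i$, I can write $u^2 = v \cdot v^i \cdot v^{i-1} = v \cdot u \cdot v^{i-1}$. Here $v$ is a nonempty prefix (as $v^i = u$ is nonempty forces $v$ nonempty) and $v^{i-1}$ is a nonempty suffix (since $i - 1 \geq 1$), so $u$ occurs as a proper factor of $u^2$. This shows that if $u$ is not a proper factor of $u^2$, then $u$ is primitive.

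\medskip

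\noindent For the converse, the plan is again to argue by contraposition: assume $u$ is a proper factor of $u^2$ and deduce that $u$ is not primitive. By hypothesis there exist nonempty words $p, s$ with $u^2 = p\, u\, s$. Write $n = \abs{u}$; then $\abs{p} + \abs{s} = n$ and $0 < \abs{p} < n$. Set $q = \abs{p}$. The occurrence of $u$ inside $u^2$ starting at position $q+1$ shows that $u$ has period $q$: indeed, comparing the letters of $u u$ at positions $j$ and $j+q$ for $1 \le j \le n-q$, the factor $u$ appearing shifted by $q$ forces the periodicity $a_j = a_{j+q}$. Since $u$ trivially has period $n$ as well, the relevant quantity is $p + q - \gcd = q + n - \gcd(q,n)$; because $q < n$ we have $q + n - \gcd(q,n) \le q + n - 1 < 2n$, but I actually want the weaker bound $\le n$ or to apply Fine–Wilf to $u u$ rather than $u$. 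The cleaner route: the word $u u$ has length $2n$ and has both period $q$ (from the embedded occurrence) and period $n$, and $2n \geq q + n - \gcd(q,n)$ holds since $q \le n$; so by \prref{thm:finewilf}, $\gcd(q,n)$ is a period of $u u$, hence of $u$.

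\medskip

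\noindent The main obstacle — and the step deserving the most care — is extracting the periodicity correctly and then concluding non-primitivity from $d := \gcd(q,n)$ being a period of $u$. Once $d$ is a period of $u$ with $d \mid n$ and $d < n$ (the latter because $q$ is not a multiple of $n$, as $0 < q < n$, so $\gcd(q,n) \le q < n$), the prefix $v$ of $u$ of length $d$ satisfies $u = v^{n/d}$ with $n/d \geq 2$. This exhibits $u$ as a proper power, so $u$ is not primitive, completing the contrapositive. I would double-check the boundary arithmetic in the Fine–Wilf application, since getting the shift index and the length inequality exactly right is where such arguments most often go wrong.
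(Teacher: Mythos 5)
The paper never proves this lemma: it is stated as a well-known fact immediately after the theorem of Fine and Wilf, so there is no in-paper argument to compare yours against. Your proof is correct in substance, and routing the hard direction through \prref{thm:finewilf} is natural given that the paper states that theorem in the same breath. One step is asserted too quickly, though: the claim that $uu$ has period $q$ ``from the embedded occurrence.'' Writing $u^2 = a_1\cdots a_{2n}$, the occurrence of $u$ at position $q+1$ only gives $a_j = a_{j+q}$ for $1 \le j \le n$, whereas period $q$ of $uu$ requires this for $1 \le j \le 2n-q$, and since $q<n$ that range is strictly larger. The missing range is closed in one line: for $n < j \le 2n-q$ one has $a_j = a_{j-n}$ and $a_{j+q} = a_{j+q-n}$ by the period $n$ of $uu$, and $a_{j-n} = a_{j-n+q}$ by the embedded occurrence, since $1 \le j-n \le n-q$. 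With that line added, the Fine--Wilf application is sound ($2n \ge q+n-\gcd(q,n)$ holds because $q<n$), and your conclusion $u = v^{n/d}$ with $d = \gcd(q,n)$, $d \mid n$ and $d \le q < n$ correctly exhibits $u$ as a proper power. Two further remarks. First, both the statement and your easy direction implicitly assume $u \ne 1$ (the empty word is not primitive under the paper's definition, yet is not a proper factor of its square); this is the standard convention, not a defect of your argument. Second, the classical proof avoids Fine--Wilf entirely: from $u^2 = pus$ one writes $u = pw$ with $w \ne 1$, cancels $p$ in $pw\,pw = p\,pw\,s$ to obtain $wp = pw$, and concludes that $p$ and $w$ are powers of a common word, so $u$ is a proper power. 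That route is more elementary; yours costs the Fine--Wilf theorem, which the paper has available anyway, so either is a legitimate way to fill the gap the paper leaves open.
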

\paragraph*{Rewriting systems}
A \kursivdef{semi-Thue system}{semi-Thue system} $S$ over the alphabet $A$ is a finite subset of $A^*\times A^*$. 
An element $(\ell,r) \in S$ is called a \kursivdef{rule}{rule}, where $\ell$ is the left side and $r$ is the right side of the rule. 
The idea of a semi-Thue system is, that left sides of rules can be replaced by right sides of the rule. 
Thus, one often also calls a semi-Thue system a \kursivdef{rewriting system}{rewriting system}. 
For a semi-Thue system $S$ we define the rewriting relation $\RA{S}$ given by
$u_1\ell u_2 \RA{S} u_1ru_2 \text{ for } u_1,u_2\in A^* \text{ and } (\ell,r)\in S$,
that is, $u \RA{S} v$ if $v$ results from $u$ by replacing the left side of a rule with the right side. 
The reflexive transitive closure of $\RA{S}$ is denoted by $\RAS{*}{S}$ and the symmetric closure of $\RAS{*}{S}$ is denoted by $\DAS{*}{S}$. 
We write $v\LA{S} u$ for $u\RA{S} v$.
A semi-Thue system $S$ is
\kursivdef{confluent}{semi-Thue system!confluent} or \emph{Church-Rosser}, if $u \RAS{*}{S} v_1$ and $u \RAS{*}{S} v_2$ imply that there exists a word $w\in A^*$ such that $v_1 \RAS{*}{S} w$ and $v_2 \RAS{*}{S} w$.
	It is \kursivdef{locally confluent}{semi-Thue system!locally confluent}, if $u \RA{S} v_1$ and $u \RA{S} v_2$ imply that there exists a word $w\in A^*$ such that $v_1 \RAS{*}{S} w$ and $v_2 \RAS{*}{S} w$.
It is \kursivdef{weight-reducing}{semi-Thue system!weight-reducing} for a weighted alphabet $(A,\Abs{\cdot})$, if $\Abs{\ell} > \Abs{r}$ for all rules $(\ell, r) \in S$ and it is	
	\kursivdef{Parikh-reducing}{semi-Thue system!Parikh-reducing}, if for all $a\in A$ and all rules $(\ell, r) \in S$ it holds $\abs{\ell}_a \geq \abs{r}_a$ and for all rules $(\ell, r) \in S$ there exists a letter $a\in A$ such that $\abs{\ell}_a > \abs{r}_a$.
	Furthermore, $S$ is \kursivdef{subword-reducing}{semi-Thue system!subword-reducing}, if $r \neq \ell$ and $r$ is a subword of $\ell$ for each rule $(\ell,r)\in S$. 
	
	The notion Parikh-reducing comes from the connection to \kursivdef{Parikh images}{Parikh image}. 
	A Parikh image of a word $w\in A^*$ is the vector $(\abs{w}_a)_{a\in A}$. 
	A semi-Thue system $S$ is Parikh-reducing if and only if 
	the Parikh image $(\abs{r}_a)_{a\in A}$ is smaller than $(\abs{\ell}_{a})_{a\in A}$ for every rule $(\ell,r)\in S$. 
	By definition every subword-reducing system is Parikh-reducing. 
	Further, it is rather easy to see that a semi-Thue system $S\subseteq A^*\times A^*$ is Parikh-reducing if and only if it is weight-reducing for every weight $\Abs{\cdot} : A^* \to \mathbb N$.

	A classical lemma states that $S$ is confluent if it is Parikh-reducing and locally confluent, see \cite{bo93springer}.	
In the following we study different cases which may occur when checking for local confluence. 
				Let $(\ell,r), (\ell',r') \in S$ be two rules and consider the word $u \ell v \ell' w$. Then
				\begin{center}
					\begin{tikzpicture}[implies/.style={double,double equal sign distance,-implies}]	
					\draw (-2.5,0) node (A) {$u \ell v \ell' w$};
					\draw (-2.5,-1.5) node (B) {$u r v \ell' w$};
					\draw (0,0) node (C) {$u \ell v r' w$};
					\draw (0,-1.5) node (D) {$u r v r' w$};
					\draw (A) edge[implies] node[right] {$S$} (B);
					\draw (A) edge[implies] node[below] {$S$}  (C);
					\draw (C) edge[implies] node[right] {$S$} (D);
					\draw (B) edge[implies] node[below] {$S$} (D);
					\end{tikzpicture}
					\end{center}
					Thus, checking for local confluence in this case is trivial. The only non-trivial cases appear when two rules overlap. There are two different kinds of overlaps:
					\begin{enumerate}
						\item $w = x\ell = \ell'y$,
						\item $w = \ell = x\ell'y$
						\end{enumerate}
						for rules $(\ell,r),(\ell',r')\in S$.
						The resulting pairs $(xr,r'y)$ and $(r,xr'y)$ are called \kursivdef{critical pairs}{critical pair}. 
						The first kind is called \kursivdef{overlap critical}{critical pair!overlap critical} and the second kind is called \kursivdef{factor critical}{critical pair!factor critical}, see also \prref{fig:cp}.
						\begin{figure}
							\begin{minipage}[h!]{6.5cm}
								\centering
								\begin{tikzpicture}[implies/.style={double,double equal sign distance,-implies}]
								\def\x{3.5}
								\node [rectangle, thin, draw=black, inner sep = 0pt,minimum height=0.42cm,minimum width = 1.5cm] at (\x-0.25,1.42) {$x\vphantom{\delta^t}$};
								\node [rectangle, thin, draw=black, inner sep = 0pt,minimum height=0.42cm,minimum width = 2cm] at (\x+1.5,1.42) {$\ell\vphantom{\delta^t}$};
								
								\node [rectangle, thin, draw=black, minimum width = 2cm, inner sep = 0pt,minimum height=0.42cm] at (\x,1) 
								{$\ell'\vphantom{\delta^t}$};  
								\node [rectangle, thin, draw=black, minimum width = 1.5cm, inner sep = 0pt,minimum height=0.42cm] at (\x+1.75,1) 
								{$y \vphantom{\delta^t}$};
								\end{tikzpicture}
								
								\caption*{overlap critically}
								\end{minipage}
								\quad\quad
								\begin{minipage}[h!]{5cm}
									\centering
									\begin{tikzpicture}[implies/.style={double,double equal sign distance,-implies}]
									\def\x{3.5}
									\def\y{9}
									\node [rectangle, thin, draw=black, inner sep = 0pt,minimum height=0.42cm,minimum width = 1.5cm] at (\x-0.25,1) {$x\vphantom{\delta^t}$};
									\node [rectangle, thin, draw=black, inner sep = 0pt,minimum height=0.42cm,minimum width = 1.5cm] at (\x+1.25,1) {$\ell'\vphantom{\delta^t}$};
									\node [rectangle, thin, draw=black, inner sep = 0pt,minimum height=0.42cm,minimum width = 1.5cm] at (\x+2.75,1) {$y\vphantom{\delta^t}$};
									
									\node [rectangle, thin, draw=black, minimum width = 4.5cm,inner sep = 0pt,minimum height=0.42cm] at (\x+1.25,1.42) 
									{$\ell\vphantom{\delta^t}$};  
									
									\end{tikzpicture}
									\caption*{factor critically}
									\end{minipage}
									\caption{Sources of critical pairs \cite{DiekertKRW15jacm}}\label{fig:cp}
									\end{figure}
									We say that a critical pair $(u,v)$ resolves if there exists a word $w\in A^*$ such that $u \RAS{*}{S} w \LAS{*}{S} v$ holds. 
									Summarized, we obtain the following:
	\begin{lemma}[\cite{KnuthBendix70}]\label{lem:knuthbendix}
		A semi-Thue system is locally confluent if and only if all its critical pairs resolve.
		\end{lemma}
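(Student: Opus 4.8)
The forward implication is immediate. Every critical pair is, by its very construction, a word $w$ equipped with two one-step rewrites $u \LA{S} w \RA{S} v$ coming from the two overlapping rule applications shown in \prref{fig:cp}. Hence if $S$ is locally confluent, these diverging steps converge to a common word, which is exactly the assertion that the pair $(u,v)$ resolves.

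For the converse I would assume that every critical pair resolves and check local confluence head on. Suppose $v_1 \LA{S} u \RA{S} v_2$, where the left step rewrites an occurrence of $\ell$ via a rule $(\ell,r)\in S$ and the right step rewrites an occurrence of $\ell'$ via $(\ell',r')\in S$; write $u = \alpha\ell\beta = \gamma\ell'\delta$ and assume without loss of generality $\abs{\alpha}\leq\abs{\gamma}$, so the first redex starts no later than the second. The only tool needed throughout is that $\RAS{*}{S}$ is compatible with concatenation: from $p\RAS{*}{S}q$ one gets $xpy\RAS{*}{S}xqy$ for all $x,y\in A^*$, directly from the definition of $\RA{S}$. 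The argument is then a case distinction on the relative position of the two occurrences.

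If they are disjoint, i.e.\ $\abs{\alpha}+\abs{\ell}\leq\abs{\gamma}$, then $u=\alpha\ell\mu\ell'\delta'$ for suitable $\mu,\delta'$ and both $v_1,v_2$ rewrite to $\alpha r\mu r'\delta'$; this is precisely the commuting square recorded just before \prref{fig:cp}. If the occurrences properly overlap but neither contains the other, then a nonempty suffix of $\ell$ equals a prefix of $\ell'$, so the shortest factor $w$ of $u$ spanning both redexes is an overlap-critical window $w=x\ell=\ell'y$, and writing $u=\alpha' w\beta'$ the two reducts become $v_1=\alpha'(xr)\beta'$ and $v_2=\alpha'(r'y)\beta'$ (up to swapping the two rules). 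Finally, if one occurrence contains the other, then $\ell=x\ell'y$ is a factor-critical window and the two reducts are $\alpha r\beta$ and $\alpha(xr'y)\beta$. In both non-trivial cases the inner pair is exactly the critical pair of \prref{fig:cp}, which resolves by hypothesis as $p\RAS{*}{S}z\LAS{*}{S}q$; embedding this resolution into the common outer context via the compatibility of $\RAS{*}{S}$ yields a common descendant of $v_1$ and $v_2$, proving local confluence.

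I expect the only real work to be the position bookkeeping that makes the trichotomy --- disjoint, proper overlap, containment --- exhaustive, together with the harmless degenerate subcases ($\ell=\ell'$ applied at the same position, or $x$ or $y$ empty) which are subsumed under the factor-critical case. Identifying each non-disjoint configuration with precisely one of the two critical-pair shapes of \prref{fig:cp} is the crux; once that matching is in place, the resolution of critical pairs combined with context-compatibility closes every diagram mechanically.
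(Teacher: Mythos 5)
Your proof is correct and takes exactly the approach the paper itself sketches in the discussion preceding the lemma (the paper defers the actual proof to the Knuth--Bendix citation): disjoint redexes commute via the displayed square, every genuinely overlapping configuration is an instance of one of the two critical-pair shapes of \prref{fig:cp}, and the resolution of that inner pair is lifted to the full words by compatibility of $\RAS{*}{S}$ with concatenation. Your filled-in bookkeeping, including the degenerate subcases, is the standard argument and matches the paper's intent.
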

		\prref{lem:knuthbendix} will be used without explicitly referring to it.

			A word $w$ is \kursivdef{irreducible}{irreducible} in $S$ if no left-side of a rule in $S$ appears in $w$. 
			We denote the set of irreducible elements of $S$ by $\IRR_S(A^*)$. 
			The relation $\DAS{*}{S}$ is a congruence on~$A^*$. Thus, one can consider the monoid $A^*\! /S = A^*\! /\!\!\DAS{*}{S}$. 
			The elements of $A^*\! /S$ are equivalence classes $[u]_S = \set{v\in A^*}{u \DAS{*}{S}v}$ of the congruence $\DAS{*}{S}$. 
			The number of elements in $A^*\! /S$ is called \emph{index} of $S$. 
			If $S$ is Parikh-reducing and (locally) confluent, then 
			there is a bijection between $A^*\! /S$ and $\IRR_S(A^*)$. 
			In this case, we denote elements of the monoid $A^*\! /S$ with the corresponding irreducible words. 
			In fact, we call a locally confluent Parikh-reducing system a \emph{Parikh-reducing Church-Rosser system}.
	Let $\varphi : A^* \to M$ be a homomorphism and $S\subseteq A^*\times A^*$ be a semi-Thue system. 
	We say that $\varphi$ \kursivdef{factorizes through}{factorize through} $S$ if for all $u\RA{S} v$ it holds $\varphi(u) = \varphi(v)$, that is, 
	equivalence classes of $S$ map to the same element in $M$. 
	We also say that $S$ is \emph{$\varphi$-invariant} if $\varphi$ factorizes through $S$. 
	This notion is algebraically motivated. 
	Let $S$ be a semi-Thue system such that $\varphi$ factorizes through $S$, 
	then $\psi : A^*\! /S \to \varphi(A^*)$ given by $\psi([u]_S) = \varphi(u)$ is a well-defined homomorphism. 
	Let $\pi_S : A^* \to A^*\! /S$ be the natural projection and $L$ be some language which is recognized by $\varphi$ and $\pi_L$ be the syntactic homomorphism of $L$. 
	Then we obtain the situation in \prref{fig:facthr}. In particular, $\pi_S$ recognizes $L$.
	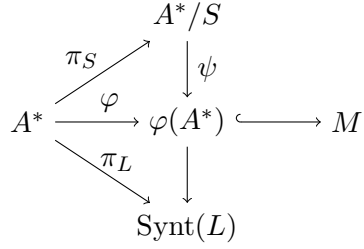
\begin{figure}
		\begin{center}
			\begin{tikzpicture}[scale=0.7]
			\draw (0,0) node (A) {$A^*$};
			\draw (3,0) node (M) {$\varphi(A^*)$};
			\draw (6,0) node (MM) {$M$};
			\draw (3,2) node (S) {$A^* / S$};
			\draw (3,-2) node (L) {$\text{Synt}(L)$};
			\draw[->] (A) -- node[pos=0.65,above] {$\varphi$} (M);
			\draw[->] (A) -- node[above left,outer sep=-2pt] {$\pi_S$} (S);
			\draw[->] (S) -- node[right] {$\psi$} (M);
			\draw[right hook->] (M) -- node[left] {} (MM);
			\draw[->] (A) -- node[pos=0.65,above] {$\pi_L$} (L);
			\draw[->] (M) -- (L);
			\end{tikzpicture}
		\end{center}
		\caption{Algebraic situation of $\varphi$ factorizes through $S$ \cite{DiekertKRW15jacm}}\label{fig:facthr}
	\end{figure}
	Since $\varphi$ factorizes through $S$ if and only if $\varphi : A^* \to \varphi(A^*)$ factorizes through $S$, we may assume that $\varphi$ is surjective. 
	If further $S$ is a Church-Rosser system, we call $A^*\! /S$ a \kursivdef{Church-Rosser representation}{Church-Rosser!representation} of $\varphi$ (or $M$). 

\section{Parikh-reducing Church-Rosser systems}
	\subsection{Outline}
	In this subsection we give an outline on the proof strategy which will be used in \prref{thm:parikhcomgroup}. 
	The macro structure of the proof is as follows: Given a homomorphism $\varphi : A^* \to G$, we construct a system $S$ which is $\varphi$-invariant by induction on $A$. 
	The construction is based on the following lemma:
		\begin{lemma}[\cite{DiekertKW12tcs,DiekertKRW15jacm}]\label{lem:bastel}
		Let $A$ be an alphabet of size at least two, $\varphi : A^* \to M$ be a homomorphism and 
		$B = A\setminus\oneset{c}$ for some $c\in A$. 
		Assume that $R\subseteq B^*\times B^*$ is a Parikh-reducing Church-Rosser system of 
		finite index which is $\varphi$-invariant. Let  $K = \IRR_R(B^*)c$ be a new alphabet and 
		$T \subseteq K^*\times K^*$ be a Parikh-reducing Church-Rosser system of finite index such that 
		\[T' := \set{c\ell \to cr}{\ell\to r \in T}\subseteq A^*\times A^*\] 
		is $\varphi$-invariant. 
		Then 
		\begin{enumerate}[a)]
			\item\label{enum:bastel:a} $S = R\cup T'\subseteq A^*\times A^*$ is a $\varphi$-invariant 
			Parikh-reducing Church-Rosser system of finite index.
			\item\label{enum:bastel:b} All groups in $A^*\! /S$ are contained in $B^*\! /R$ or in $K^*\! /T$.
			\item\label{enum:bastel:c} The index of $A^*/S$ is $\abs{B^*\! /R}+\abs{B^*\! /R}^2\abs{K^*\! /T}$.					
				\end{enumerate}
			\end{lemma}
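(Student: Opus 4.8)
The plan is to verify the three claims by carefully analyzing how words over $A$ decompose relative to the distinguished letter $c$, using that $R$ operates only on the $c$-free blocks and $T'$ operates on the $c$-delimited structure. Let me think through the mechanics.

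Every word $w \in A^*$ factors uniquely as $w = w_0 c w_1 c \cdots c w_n$ with each $w_i \in B^*$, where the $c$'s are the occurrences of the distinguished letter. The rules of $R \subseteq B^*\times B^*$ rewrite inside a single block $w_i$, while the rules in $T' = \{c\ell \to cr\}$ with $\ell,r \in K^*$ act on factors beginning with a $c$, i.e.\ on the structure $c \cdot (\text{irreducible }B^*\text{-block})$ read as letters of $K = \IRR_R(B^*)c$.

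For part a), I would proceed as follows. First, $\varphi$-invariance: $R$ is $\varphi$-invariant by hypothesis and $T'$ is $\varphi$-invariant by assumption, so $S = R \cup T'$ is $\varphi$-invariant since invariance is closed under union. Second, Parikh-reducing: both $R$ and $T'$ are Parikh-reducing (the latter because prepending the common letter $c$ to both sides of a $T$-rule preserves the Parikh differences), hence so is their union. By the classical lemma quoted before \prref{lem:knuthbendix}, it then suffices to prove local confluence, which by \prref{lem:knuthbendix} reduces to resolving all critical pairs. The key observation is that left sides of $R$-rules lie in $B^*$ (no $c$) while left sides of $T'$-rules begin with $c$ and their $K$-letter structure forces any internal $c$ to be immediately followed by an irreducible $B$-block; consequently an $R$-rule and a $T'$-rule can never genuinely overlap. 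I would argue that any overlap or factor situation between two rules either lies entirely within one block (reducing to confluence of $R$) or respects the $K$-factorization (reducing to confluence of $T$): an $R$-reduction inside a block $w_i$ commutes with $T'$-reductions elsewhere, and once $w_i$ is $R$-irreducible it becomes a genuine letter of $K$ on which $T$ acts confluently. Finite index follows because irreducible words of $S$ correspond to words $w_0 c w_1 \cdots c w_n$ where $w_0 \in \IRR_R(B^*)$ and $c w_1 \cdots c w_n \in \IRR_T(K^*)$, a finite set by the finiteness of $\IRR_R(B^*)$ and $\IRR_T(K^*)$.

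For part b), the groups of $A^*\!/S$ come from $\mathcal{H}$-classes, and I would locate each idempotent's group in one of the two factors by the same block analysis: a group element acting nontrivially within the $c$-free prefix structure lives in $B^*\!/R$, while one whose action is captured by the $K$-letter structure lives in $K^*\!/T$, since the local divisor at $c$ of $A^*\!/S$ is exactly $K^*\!/T$. For part c), I would count irreducible words directly: writing an irreducible $w = w_0 \cdot (c w_1 \cdots c w_n)$, the prefix $w_0$ ranges over $\IRR_R(B^*)$ (giving a factor $|B^*\!/R|$), and the remaining $c$-initial part ranges over $\IRR_T(K^*)$; resolving how the count $|B^*\!/R| + |B^*\!/R|^2|K^*\!/T|$ arises requires tracking that a word with no $c$ contributes $|B^*\!/R|$ options, whereas a word with at least one $c$ splits as (a $B$-prefix)$\times$(the $K^*$-word, whose last $K$-letter itself encodes a $B$-block) giving the quadratic term. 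The main obstacle I anticipate is the critical-pair analysis in part a): one must rule out spurious overlaps between the $c$-prefixed $T'$-rules and the block-internal $R$-rules and show every critical pair resolves by reducing to the already-confluent systems $R$ and $T$, being careful that applying an $R$-rule to a block can create or destroy a $K$-letter and hence interact with $T'$; the resolution hinges on the fact that $T'$-left-sides only read $R$-irreducible blocks, so an $R$-reduction must complete before a $T'$-rule can apply at that position.
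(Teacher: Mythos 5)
Your sketch of part a) is sound and is essentially the argument the paper outsources to \cite{DiekertKRW15jacm}: left sides of $R$-rules contain no $c$, every $B$-block inside a $T'$-left side is $R$-irreducible, so the two rule families admit no critical pairs, and overlaps of two $T'$-left sides respect the $K$-factorization because $K$-letter boundaries are marked by the occurrences of $c$. The genuine problems are in parts b) and c) --- which is unfortunate, since c) is the only part the paper proves directly (a) and b) are cited from \cite{DiekertKRW15jacm} and \cite{DiekertKW12tcs,AlmeidaK16}).

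In part c) (and already in your finite-index argument in a)) your decomposition is wrong: the $c$-initial part $cw_1\cdots cw_n$ of an irreducible word does not range over $\IRR_T(K^*)$; it is not even an element of $K^*$ unless $w_n=1$, because every letter of $K=\IRR_R(B^*)c$ ends with $c$, whereas this word ends with the block $w_n\in B^*$. The paper's decomposition is $w = u\,c\,v\,w'$ with $u,w'\in\IRR_R(B^*)$ and $v=(w_1c)\cdots(w_{n-1}c)\in\IRR_T(K^*)$: the trailing block $w'$ lies \emph{outside} the $K^*$-part, and it is exactly this block that contributes the second factor $\abs{B^*\!/R}$ in $\abs{B^*\!/R}^2\abs{K^*\!/T}$. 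Your patch --- that ``the last $K$-letter itself encodes a $B$-block'' --- contradicts the definition of $K$, so your count never cleanly produces the stated formula. In part b) you assert, but do not prove, the key step: even granting that the local divisor of $A^*\!/S$ at $c$ is $K^*\!/T$, nothing in your argument shows that a group $H\subseteq A^*\!/S$ all of whose elements contain $c$ embeds into it. For that one must note that leading and trailing $B$-blocks are touched only by $R$-rules (every $T'$-left side begins and ends with $c$), so $h=ehe$ forces each $h\in H$ to have irreducible form $u\,c\,v_h\,w'$ with the \emph{same} $u,w'$ as the identity $e$; writing $z$ for the $R$-normal form of $w'u$, the map $h\mapsto v_h\cdot[zc]_T$ is then an injective homomorphism $H\to K^*\!/T$. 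This is precisely the Rees-extension property the paper imports from \cite{DiekertKW12tcs,AlmeidaK16}, and it is missing from your proposal.
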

			\begin{proof}
				\ref{enum:bastel:a}) is proved in \cite{DiekertKRW15jacm}. By \cite{DiekertKW12tcs}, $A^*\! /S$ is a so-called Rees extension monoid and the statement of \ref{enum:bastel:b}) follows from general properties of Rees extension monoids, see \cite{AlmeidaK16}. 
				
				It remains to calculate the size of the index of $S$. Every irreducible word in $S$ which contains no $c$ is contained in $B^*\! /R$. Conversely, every element of $B^*\! /R$ is irreducible in the rewriting system given by $S$. Every irreducible word in $S$ which contains at least one $c$ is of the form $ucvw$ for $u,w\in B^*\! /R$ and $v \in K^*\! /T$. By the definition of the rule set $S$ every such word $ucvw$ is also irreducible. This shows that there are exactly $\abs{B^*\! /R}^2\abs{K^*\! /T}$ irreducible words in $S$ which contains at least one $c$.
			\end{proof}
		For a fixed letter $c\in A$ we remove $c$ and obtain the alphabet $B = A \setminus \oneset{c}$. 
		Inductively, one obtains a system $R \subseteq B^*\times B^*$ which factorizes through $\varphi$. Now, consider a new alphabet $K = \IRR_R(B^*)c$. By \prref{lem:bastel}, it remains to construct a system $T \subseteq K^* \times K^*$. The system $T$ contains two kinds of rules: $\Delta$-rules and $\Omega$-rules. The idea of these rules is to deal with different kind of words. The set $T_\Delta$ of $\Delta$-rules deals with long repetitions of short words. Whenever there is no long repetition of short words, this is witnessed by a marker word $\omega$. The set $T_\Omega$ of $\Omega$-rules contains rules of the form $\omega u \omega \to \omega \gamma(u) \omega$ for some normal forms $\gamma(u)$. \prref{lem:tOmegaexist} shows that such rules appear for sufficiently large words and \prref{lem:parikhisconfl} shows the confluence of the constructed system. 
	\subsection{Commutative Groups}\label{subsec:commgrpcr}
	In this section we study Parikh-reducing Church-Rosser systems for abelian groups. Let $\varphi : A^* \to G$ be a homomorphism in an abelian group $G$. 
	We construct a system for $G$ by sorting the letters $a$ and then reducing them modulo their order. 
	Thus, we actually construct a Church-Rosser representation for the group $\prod_{a\in A} \mathbb Z/\ord(\varphi(a))\mathbb Z$.
	The situation obtained in \prref{thm:parikhcomgroup} is shown in the commutative diagram \prref{fig:parikhcomgroup}.
	\begin{figure}
		\begin{center}
			\begin{tikzpicture}[scale=0.65]
			\draw (-3,2) node (A) {$A^*$};
			\draw (3,0) node (M) {$\varphi(A^*)$};
			\draw (6,0) node (MM) {$G$};
			\draw (3,2) node (S) {$\prod_{a\in A} \mathbb Z/\ord(\varphi(a))\mathbb Z$};
			\draw (3,4) node (SS) {$A^*\! / S$};
			\draw[->] (A) -- node[pos=0.65,below] {$\varphi$} (M);
			\draw[->] (A) -- node[above left,outer sep=-2pt] {$\pi_S$} (SS);
			\draw[->] (A) -- (S);
			\draw[->] (S) -- (M);
			\draw[->] (SS) -- (S);
			\draw[right hook->] (M) -- node[left] {} (MM);
			\end{tikzpicture}
		\end{center}
		\caption{Commutative diagram in the situation of \prref{thm:parikhcomgroup}.}\label{fig:parikhcomgroup}
	\end{figure}
	
	\begin{theorem}\label{thm:parikhcomgroup}
		Let $\varphi : A^* \to G$ be a homomorphism to a finite commutative group $G$. Then there exists a Parikh-reducing Church-Rosser system $S$ of finite index which factorizes through $\varphi$. Further, all groups contained in $A^*\! /S$ are isomorphic to some subgroup of $\prod_{a\in A} \mathbb Z/\ord(\varphi(a))\mathbb Z$.
	\end{theorem}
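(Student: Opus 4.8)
The plan is to construct $S$ by induction on the size of the alphabet $A$, using \prref{lem:bastel} as the inductive engine. For the base case $\abs{A}=1$, say $A = \oneset{a}$ with $\ord(\varphi(a)) = n$, the single rule $a^n \to 1$ already works: it is Parikh-reducing, has no nontrivial overlaps and hence is confluent, factorizes through $\varphi$ because $\varphi(a)^n = 1$, and yields $A^*\!/S \cong \mathbb Z/n\mathbb Z$. For the inductive step I fix a letter $c \in A$, set $B = A\setminus\oneset{c}$, and obtain from the induction hypothesis applied to $\varphi|_{B^*}$ a Parikh-reducing Church-Rosser system $R \subseteq B^*\times B^*$ of finite index which is $\varphi$-invariant and all of whose groups embed into $\prod_{b\in B}\mathbb Z/\ord(\varphi(b))\mathbb Z$. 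By \prref{lem:bastel} it then suffices to produce a Parikh-reducing Church-Rosser system $T$ of finite index on the alphabet $K = \IRR_R(B^*)c$ such that $T' = \set{c\ell\to cr}{\ell\to r\in T}$ is $\varphi$-invariant; the system $S = R\cup T'$ is then automatically of the required form.

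To build $T$ I exploit that $G$ is abelian, so that $\varphi(w)$ for a word $w$ over $K$ depends only on its Parikh image, and that every generator has finite order. The normal form I aim for is obtained by \emph{sorting} the letters and then \emph{reducing each multiplicity modulo the order} of the corresponding group element. Since pure transposition rules $xy\to yx$ are never Parikh-reducing, sorting cannot be realized directly, and instead $T$ consists of two families of rules. The $\Delta$-rules act on long repetitions of a short word $s\in K^*$: writing $d$ for the order of $\varphi(s)$ in $G$, the block $s^d$ has trivial $\varphi$-image, so deleting it (or folding it into a shorter word of equal image) is genuinely Parikh-reducing while preserving $\varphi$. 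The $\Omega$-rules treat the complementary situation, in which no long repetition occurs and this is witnessed by a marker factor $\omega$; a rule $\omega u\omega \to \omega\gamma(u)\omega$ then replaces a marker-flanked infix $u$ by a canonical sorted-and-reduced representative $\gamma(u)$ of strictly smaller Parikh image but identical $\varphi$-value. Each rule of $T$ is Parikh-reducing by construction, and since prepending $c$ preserves equality of $\varphi$-values, every rule of $T'$ satisfies $\varphi(c\ell)=\varphi(cr)$, so $\varphi$-invariance holds rule by rule.

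The substantial work, and the main obstacle, is \textbf{confluence} of $T$. By \prref{lem:knuthbendix} it suffices that every critical pair resolves, and the delicate overlaps are those between a $\Delta$-rule and an $\Omega$-rule, and those between two $\Omega$-rules whose marker regions intersect. I would first prove an existence statement guaranteeing that a sufficiently long word either contains a reducible repetition (handled by $\Delta$) or carries a marker factor (so that an $\Omega$-rule applies); the combinatorial heart of this step is the theorem of Fine and Wilf, \prref{thm:finewilf}, which forces short coincident periods to collapse and thereby pins down exactly when a long word is repetition-free. Granting this, resolving each critical pair reduces to verifying that the two ways of rewriting reach the same sorted-and-reduced word, which holds because $\gamma$ is a bona fide normal-form operator for the commutative target. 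Finiteness of the index is then immediate from the bound in \prref{lem:bastel} (part~c), since $\IRR_T(K^*)$ is finite.

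Finally, for the statement on subgroups I invoke \prref{lem:bastel} (part~b): every group in $A^*\!/S$ lies in $B^*\!/R$ or in $K^*\!/T$. Groups of the first kind embed into $\prod_{b\in B}\mathbb Z/\ord(\varphi(b))\mathbb Z$ by the induction hypothesis, and hence into $\prod_{a\in A}\mathbb Z/\ord(\varphi(a))\mathbb Z$. For groups of the second kind I would show that the canonical homomorphism $A^*\!/S \to \prod_{a\in A}\mathbb Z/\ord(\varphi(a))\mathbb Z$ from \prref{fig:parikhcomgroup} is injective on each maximal subgroup: two $\mathcal H$-equivalent elements are represented by sorted-and-reduced irreducible words, which are completely determined by their Parikh image taken modulo the orders, that is, by their image in the product. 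Combining both cases yields the claim.
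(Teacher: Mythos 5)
Your outline does follow the paper's route: induction on $\abs A$ with \prref{lem:bastel} as the engine, the alphabet $K = \IRR_R(B^*)c$, and a system $T = T_\Delta \cup T_\Omega$ combining repetition rules with marker-flanked rules $\omega u \omega' \to \omega\gamma(u)\omega'$ for a sorted-and-reduced normal form $\gamma$. However, there is a genuine gap in your last step, the subgroup statement. It is not true that every subgroup of $K^*\!/T$ consists of ``sorted-and-reduced'' irreducible words, nor that the canonical homomorphism from $A^*\!/S$ to $\prod_{a\in A}\mathbb Z/\ord(\varphi(a))\mathbb Z$ (induced by the letter-counting homomorphism $\psi$) is injective on subgroups. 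The $\Delta$-rules themselves create subgroups whose elements are long powers of short words: for a primitive $\delta \in \Delta$, the classes of $\delta^t,\ldots,\delta^{t+n-1}$ are irreducible (they lie in $F$, so they contain no $\Omega$-factor, and one checks that no left side $\eta^{t+n}$ of a $\Delta$-rule occurs in them), and they form a cyclic group of order $n$ in $K^*\!/T$, since $\delta^i\delta^j$ reduces to $\delta^{t+((i+j-t)\bmod n)}$. Such a group is \emph{collapsed} by the canonical homomorphism whenever $\psi(\delta)$ has order smaller than $n$. Concretely, take $A = \os{a,c}$ with $\ord(\varphi(a)) = 4$ and $\ord(\varphi(c)) = 2$, so that $n=4$ and $ac, aaac \in K$; then $\delta = (ac)(aaac) \in K^2 \subseteq \Delta$ is primitive, and $\abs{\delta}_a = 4$, $\abs{\delta}_c = 2$ give $\psi(\delta) = 0$. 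Hence $\os{\delta^t,\ldots,\delta^{t+3}}$ is a cyclic group of order $4$ on which the canonical map is constant. The theorem remains true for this group, but only because a cyclic group of order dividing $n$ embeds \emph{abstractly} into the product ($n$ being the least common multiple of the $\ord(\varphi(a))$), not via the map your argument relies on.

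Closing this gap is precisely the content of the paper's \prref{lem:groupinT}, for which your sketch has no counterpart: one distinguishes whether the elements of a subgroup $H$ contain an $\Omega$-factor or not. In the first case your normal-form argument is essentially the paper's. In the second case one must prove that \emph{all} elements of $H$ are powers of one common primitive word $\delta\in\Delta$, so that $H$ sits inside the cyclic group $\os{\delta^t,\ldots,\delta^{t+n-1}}$; this is where \prref{thm:finewilf} and \prref{lem:charprimitive} are actually used in the paper --- not in the marker-existence step (\prref{lem:t0parikhexist}, \prref{lem:tOmegaexist}), which is elementary counting, yet is where your outline invokes Fine and Wilf. Two further, smaller omissions in the confluence plan: your $\Omega$-rules carry no side condition, whereas the paper requires the flanking markers to be $\Omega$-maximal with respect to the preorder $\preceq$, and this condition (through \prref{lem:omegamax} and \prref{lem:parikhnf}) is what makes overlapping $\Omega$-rules resolve --- ``$\gamma$ is a normal-form operator'' alone does not settle those critical pairs, because rules exist only in the length window $t \leq \abs{\omega u \omega'} \leq t_\Omega$. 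Likewise, the right sides of the $\Delta$-rules must stay long ($\delta^{t+n}\to\delta^t$ rather than deleting a block $s^{\ord(\varphi(s))}$), since otherwise factor-critical pairs between $\Delta$- and $\Omega$-rules need not resolve.
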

	\begin{proof}	
		Let $n$ be the least common multiple of $\ord(\varphi(a))$ for $a\in A$.
		We do an inductive proof on the number of letters $\abs A$. If $A = \os{c}$, then we may set $S = \os{c^n \to 1}$. 
		This system is Parikh-reducing, locally confluent and it holds $A^*\! /S \simeq \mathbb Z/n\mathbb Z$. Thus, we may assume that $\abs A > 1$. Let $A = \os{a_1, \ldots, a_s, c}$ be the alphabet and $c \in A$ be an arbitrary letter of $A$. We consider the alphabet $B = A \setminus \os{c}$. Inductively, $B$ is smaller than $A$, we get a Parikh-reducing Church-Rosser system $R \subseteq B^* \times B^*$ of finite index which factorizes through $\varphi_{|B^*} : B^* \to G$. 
		The idea is to first reduce the words over~$B^*$ and then work over a new alphabet $K$. Let $K = \IRR_R(B^*)c$ be the new alphabet of irreducible words over $B^*$ appended by the letter $c$ which poses as a separator.
		We will first construct a Parikh-reducing (over $A^*$)  Church-Rosser system $T \subseteq K^* \times K^*$ of finite index. Note that this system $T$ is not Parikh-reducing over $K^*$.
		We will use two different sets of rules. One for long repetitions of short words and one for longer words which are not repetitions of such short words. 
		%
		Let us first define the set of short words as $\Delta = K^{\leq n}\setminus\os{1}$, that is, as the set of nonempty words of length at most $n$.
		Let further be $$T_\Delta = \set{\delta^{t+n} \to \delta^t}{\delta \in \Delta}$$ the system of $\Delta$-rules whereas $t = 3n(s+4)+n$. 
		The choice of the parameter $t$ will be explained later. 
		For now, the fact that $t > 2n$ is sufficient to obtain that 
		$T_\Delta$ is a Parikh-reducing (over $K^*$, and thus also over $A^*$) Church-Rosser system by  \prref{lem:deltacrsystem}. 
			\begin{lemma}[\cite{DiekertKRW15jacm}]\label{lem:deltacrsystem}
				Let $\Delta \subseteq K^{\leq n}$ be a set of nonempty words of length at most $n$ which is closed under nontrivial factors, $t > 2n$ and $n\geq 1$.
				Then \[
				T_\Delta = \set{\delta^{t+n} \to \delta^t}{\delta\in \Delta}
				\] is a subword-reducing Church-Rosser system. In particular, $T_\Delta$ is Parikh-reducing and weight-reducing for every weight.
			\end{lemma}
		%
		Next, we will introduce marker words. They basically mark the absence of a long repetition of words in $\Delta$, i.e., a long enough word in $K^*$ will either contain a marker word or a rule in $T_\Delta$. The next lemma shows that the length of such markers can be bounded by $2n$.
			\begin{lemma}[\cite{DiekertKRW15jacm}]\label{lem:minimalefactorgegenbsp}
				Let $\Delta \subseteq K^{\leq n}$ be a set and let $F = \bigcup_{\delta \in \Delta, i \in \mathbb N} \Factors(\delta^i)$. 
				Then $K^* \setminus F$ is an ideal which is generated by a set $J \subseteq K^{\leq 2n}$ of words of length at most $2n$, that is, $K^* \setminus F = K^* J K^*$.
			\end{lemma}
		Thus, letting $F = \bigcup_{\delta \in \Delta, i \in \mathbb N} \Factors(\delta^i)$, we obtain $K^* \setminus F = K^* J K^*$ for some $J \subseteq K^{\leq 2n}$.
		In order to ensure that we find such a marker which does not start with a $c \in K$, we increase the length of a marker to $3n$. Formally, let $\Omega = K^{3n} \setminus (cK^* \cup F)$ be the set of markers.

		%
		Let $\preceq$ be a total preorder on $\Omega$ with the following properties:
		\begin{itemize}
			\item $\omega, \eta \in\Omega$ with 
			$\omega \in K^*(K\setminus\os{c}) c^i, \eta \in K^*(K\setminus\os{c}) c^j$ and $i>j$ 
			implies $\omega \preceq \eta$.
			\item $\preceq$ is a total order on $\Omega \setminus Kc^{3n-1}$.
			\item $\omega, \eta \in \Omega\cap Kc^{3n-1}$ implies $\omega \preceq \eta$.
		\end{itemize}
		Thus, the larger the block of $c$'s at the suffix of an $\omega$, the smaller it is with respect to~$\preceq$.
		Additionally, all elements in $\Omega$ with a maximal block of $c$'s at the suffix are equivalent with respect to $\preceq$. 
		In particular, $\omega \preceq \eta$ and $\eta \preceq \omega$ implies either $\eta = \omega$ or there exists $b_1,b_2 \in K$ with $\omega = b_1 c^{3n-1}$ and $\eta = b_2c^{3n-1}$. 
		Let $u \in K^* \omega K^*$ for some $\omega \in \Omega$. 
		We say that $\omega$ is a \kursivdef{maximal $\Omega$-factor}{maximal!$\Omega$-factor} of $u$, 
		if $u \in K^* \eta K^*$ with $\eta \in \Omega$ implies $\eta \preceq \omega$.
		We want to show that every long word contains sufficiently large factors which are surrounded by ``locally'' maximal $\Omega$-factors. The first step is to show the existence of $\Omega$-factors.
		\begin{lemma}\label{lem:t0parikhexist}
			There exists a number $t_0$ such that for every word $v \in K^*$ with length at least 
			$t_0$ has a factor $\delta^{t+n}$ for some $\delta \in \Delta$ or a factor $\omega \in \Omega$.
		\end{lemma}
		\begin{proof}
			Let $t_0 = (t+n+3)(n+1)$. 
			If $v \notin \IRR_{T_\Delta}(K^*)$ the statement is true. Thus, we assume that for all $\delta \in \Delta$ there is no factor $\delta^{t+n}$ of $v$. 
			There is a factorization $v = c^\ell v_1 v_2$ such that $v_1 \in F$ is maximal and $v_1$ has no $c$ as a prefix.
			\begin{figure}			 
				\begin{center}\begin{tikzpicture}[implies/.style={double,double equal sign distance,-implies}]
					\def\x{0}
					\def\y{9}		 	
					\node [rectangle, thin, draw=black, minimum width = 0.6cm, inner sep = 1pt] at (\x+1.65,1.53) {$c^\ell\vphantom{c^\ell_1}$};  
					\node [rectangle, thin, draw=black, minimum width = 1.6cm, inner sep = 1pt] at (\x+2.75,1.53) 
					{$v_1\vphantom{c^\ell_1}$};
					\node [rectangle, thin, draw=black, minimum width = 2.5cm, inner sep = 1pt] at (\x+4.8,1.53) 
					{$v_2\vphantom{c^\ell_1}$};
					
					\node [rectangle, thin, draw=black, inner sep = 1pt,minimum width = 0.5cm] at (\x+2.2,1) {$\delta\vphantom{c^\ell_1}$};
					\node [rectangle, thin, draw=black, inner sep = 1pt,minimum width = 0.5cm] at (\x+2.7,1) {$\delta\vphantom{c^\ell_1}$};
					\node [rectangle, thin, draw=black, inner sep = 1pt,minimum width = 0.5cm] at (\x+3.2,1) {$\delta\vphantom{c^\ell_1}$};
					
					\node [rectangle, thin, draw=black, inner sep = 1pt,minimum width = 1cm] at (\x+3.85,0.47) {$u\vphantom{c^\ell_1}$};
					\node [rectangle, thin, draw=black, inner sep = 1pt,minimum width = 1.4cm] at (\x+3.65,-0.05) {$u'\vphantom{c^\ell_1}$};
					\node [rectangle, thin, draw=black, inner sep = 1pt,minimum width = 1.6cm] at (\x+3.75,-0.58) {$u''\vphantom{c^\ell_1}$};
					\end{tikzpicture}
				\end{center}
				\caption{Construction of a factor in $\Omega$ as used in \prref{lem:t0parikhexist}.}
			\end{figure}
			Hence we obtain $\ell < t+n$ and $\abs{v_1} < (t+n)n$ which implies $\abs{v_2} \geq 3n+3 > 3n-1$ by definition of $t_0$. 
			As $v_1 \in F$, there is some $\delta \in \Delta$ which does not have $c$ as prefix and $v$ is a prefix of $\delta^+$. 
			Consider the first factor $u$ of length $2n$ of $v_1v_2$ which is not in $F$. 
			Since $v_1$ is a prefix of $\delta^+$, one must take at most $n-1$ additional letters left from $u$ in order to obtain a factor $u'$ of $v_1v_2$ which is not in $F$, has length at most $3n$ and does not start with a $c$. 
			Filling up $u'$ with letters from the right, we obtain a factor $u''$ of $v_1v_2$ which is not in $F$, has length $3n$ and does not start with a $c$, that is, $u'' \in \Omega$.
		\end{proof}
			\begin{lemma}\label{lem:tOmegaexist}
				There exists a number $t_\Omega$ such that every word $v\in K^*$ of length at least $t_\Omega$ contains 
				either
				\begin{itemize}
					\item  a factor $\delta^{t+n}$ for $\delta\in \Delta$ or 
					\item a factor $\omega u \omega$ with $\omega \in \Omega$, $t < \abs{\omega u \omega} \leq t_\Omega$ and for every $\eta \in \Omega$ 
					with $\omega u \omega\in A^* \eta A^*$ we have $\eta \preceq \omega$.
				\end{itemize}
			\end{lemma}
			\begin{proof}
				Let $\Omega_v = \set{\omega \in \Omega}{v \in A^* \omega A^*}$ be the set of $\Omega$-factors of $v$ and 
				let $t_k$ be defined by the recursion $t_k = 2t_{k-1} + t$. 
				A quick calculation verifies the explicit formula $t_k = 2^k(t_0+t)-t$.
				We prove the following statement by induction on $k$:
				For every word $v$ of length at least $t_k$ which has at least $k$ different $\Omega$-factors, i.e., $k \geq \abs{\Omega_v}$ and which does not contain a factor $\delta^{t+n}$ for $\delta \in \Delta$, there exists a factor $\omega u \omega$ of $v$ such that
				\begin{itemize}
					\item $\omega \in \Omega$,
					\item $t < \abs{\omega u \omega} \leq t_k$ and
					\item $\omega$ is a maximal $\Omega$-factor of $\omega u \omega$.
				\end{itemize}
				The case $k=0$ is trivial since by hypothesis every word $v$ with length at least $t_0$ and $\abs{\Omega_v} = 0$ must contain a factor $\delta^{t+n}$ for $\delta \in \Delta$.
				Consider the case $k>0$. Since we require that the length of the factor $\omega u \omega$ is smaller or equal to $t_k$, 
				we consider the prefix of $v$ of length $t_k$. 
				In particular, 
				we can assume that every proper factor of $v$ has length smaller than $t_k$.
				
				Consider the factorization $v = pfq$ with $f \in (\omega A^* \cap A^* \omega)$ such that $\omega$ is a maximal $\Omega$-factor of $v$ and $f$ is maximal with regard to length. 
				If $\abs f \leq t$, we obtain 
				$$t_k = 2t_{k-1} + t \leq \abs{pfq} = \abs{pq} + \abs f \leq \abs{pq} + t$$ 
				which implies $\abs{pq} \geq 2t_{k-1}$.
				Since $p$ and $q$ contain no factor $\omega$, we can apply induction to either $p$ or $q$.
				If $\abs f > t$, then $f$ has the form $f = \omega u \omega$ for a word $u$ because of $t> 2\max_{\omega \in \Omega}\abs{\omega}$ and $f \in (\omega A^* \cap A^* \omega)$. The factor $f$ has the required properties since $\abs f \leq \abs v \leq t_k$. This concludes the induction.
				We infer the statement of the lemma by setting $t_\Omega = t_{\abs \Omega}$.
			\end{proof}
			
		In particular, \prref{lem:tOmegaexist} shows the existence of a number $t_\Omega$ such that every $v \in \IRR_{T_\Delta}(K^*)$ 
		with $\abs v \geq t_\Omega$ contains a factor $\omega u  \omega'$ 
		with $\omega, \omega'$ being $\Omega$-maximal for this factor and $t< \abs{\omega u \omega'} \leq t_\Omega$. 
		The idea is to reduce $u$ to a normal form $\gamma(u)$. This is the part where commutativity of $G$ is needed.
		Let $a\in A$ be a letter and $\abs{u}_a$ be the number of occurrences of $a$ in $u$. 
		Define $\gamma_a(u) = a^{\abs{u}_a \mod \ord(\varphi(a))} c^{3n}$ and $$\gamma(u) = c^{3n} \gamma_{a_1}(v)\ldots \gamma_{a_s}(v)\gamma_c(v).$$ 
		The mapping $\gamma$ is a normal form in the group $\prod_{a \in A} \mathbb Z/\ord(a)\mathbb Z$, i.e., let $\psi : A^* \to \prod_{a \in A} \mathbb Z/\ord(a)\mathbb Z$ be the homomorphism counting the different letters $a$ modulo $\ord(a)$, then $\psi(u) = \psi(v)$ if and only if $\gamma(u) = \gamma(v)$.
		By choice of $\gamma_a(u)$ we have $\gamma(u) \in K^*$. Since $\abs{\gamma_a(u)} = 3n$ for $a\in B$ and $3n \leq \abs{\gamma_c(u)} < 4n$, we obtain \[t-7n = 3n(s+2) \leq \abs{\gamma(u)} < 3n(s+2)+n = t - 6n.\]
		In particular, $\varphi(u) = \varphi(\gamma(u))$ and $\gamma(u\gamma(u')) = \gamma(uu') = \gamma(u'u) = \gamma(\gamma(u')u)$.
		Additionally, if $u\in K^*$ with $\abs{u} \geq 3n(s+2)+n = t -6n$, then $u \mapsto \gamma(u)$ is Parikh-reducing over $A^*$ since at least the number of $c$ decreases. 
		Note that the inequality $t-n \leq \abs{\omega \gamma(u) \omega'} < t$ is actually the reason for the definition of $t$.
		Let $$T_\Omega = \set{\omega u \omega' \to \omega \gamma(u) \omega'}{t\leq \abs{\omega u \omega'} \leq t_\Omega \text{ and } \omega, \omega' \text{ are }\Omega\text{-maximal for } \omega u \omega'}$$ 
		be the set of $\Omega$-rules. By definition of $\gamma$ the set of $\Omega$-rules is Parikh-reducing over~$A^*$. 
		Note that for a $\Omega$-rule, either $\omega$ and $\omega'$ are minimal elements in $\Omega$ or $\omega = \omega'$.
		By \prref{lem:tOmegaexist} the system $T = T_\Delta \cup T_\Omega$ has only finitely many irreducible elements. 
		It remains to prove that $T$ is Church-Rosser. By \prref{lem:deltacrsystem} the set $T_\Delta$ of $\Delta$-rules is (locally) confluent. 
		Next, we will study properties of $\Omega$-rules which are crucial for showing that $T$ is Church-Rosser.
		First, we show that $T$-rules preserve $\Omega$-maximal elements.
		\begin{lemma}\label{lem:omegamax}
			Let $u \RA{T} v$ and let $\omega$ be a maximal $\Omega$-factor of $u$. 
			Then $\eta \preceq \omega$ for every  $\Omega$-factor $\eta$ of $v$.
		\end{lemma}
		\begin{proof}
			As $T = T_\Delta \cup T_\Omega$ there are two cases for the rule set of $u \RA{T} v$.
			
			In the case that $u \RA{T_\Delta} v$ there must exists a $\delta \in \Delta$ and a factorization $u = u_1 \delta^{t+n} u_2$ such that $v = u_1 \delta^t u_2$. 
			By construction, we have  $t > 3n = \abs{\omega}$. Thus, every element of $\Omega$ is a factor of $u$ if and only if it is also a factor of $v$. 
			Since $\omega$ is $\Omega$-maximal for $u$, it is also $\Omega$-maximal for $v$.
			
			If $u \RA{T_\Omega} v$, there is a factorization
			$u = u_1 \omega_1 \hat u \omega_2 u_2$ such that $v = u_1 \omega_1 \gamma(\hat u) \omega_2 u_2$ and $\omega_1, \omega_2$ are maximal $\Omega$-factors of $\omega_1 \hat u \omega_2$. 
			Since every marker in $\Omega$ has fixed length $3n$, it remains to show that $\omega_1 \gamma(\hat u) \omega_2$ has no $\Omega$-factors larger than $\omega_1$ (and by $\omega_1 \preceq \omega$, also no $\Omega$-factors larger than $\omega$).
			Note that $\gamma(\hat u)$ has $c^{3n}$ as prefix and suffix. 
			Every $\Omega$-factor of $\omega_1 \gamma(\hat u)$ which is not an $\Omega$-factor of $\gamma(\hat u)$ has the form $\zeta c^i$ for some $i\geq 0$ and $\zeta$ is a suffix of $\omega_1$. Since the block of $c$'s at the suffix of $\zeta c^i$ may only increase, we obtain $\zeta c^i \preceq \omega_1$ by definition of $\preceq$. 
			Since every element of $\Omega$ has length $3n$ and does not have $c$ as a prefix, 
			there is no $\Omega$-factor in $\gamma(\hat u) \omega_2$ which is neither in $\gamma(\hat u)$ nor equals $\omega_2$. 
			By construction, every $\Omega$-factor of $\gamma(\hat u)$ is of the form $\gamma_a(\hat u)$ for some $a\in B$. 
			However, $\gamma_a(\hat u)$ is a minimal element of $\Omega$ by construction. 
			In particular, $\eta \preceq \omega_1 \preceq \omega$ for every $\Omega$-factor $\eta$ of $\omega_1 \gamma(\hat u) \omega_2$.
		\end{proof}
		
		Next, as an intermediate step, we show local confluence in the case of a left side $\omega u \omega'$ of a rule in $T_\Omega$. In particular, we show that every word of this form can be reduced to a fixed normal form.
		\begin{lemma}\label{lem:parikhnf}
			Let $\omega u \omega'$ be a word such that $\omega$ and $\omega'$ are maximal $\Omega$-factors of $\omega u \omega'$ and $\abs{\omega u \omega' }\geq t$. Then $\omega u \omega' \RA{T} v$ implies $v \RAS{*}{T} \omega \gamma(u) \omega'$.
		\end{lemma}
		\begin{proof}
			The statement is clear if $v = \omega \gamma(u) \omega'$ which is why we may assume $v \neq \omega \gamma(u) \omega'$. 
			We show the lemma inductively on the length of $\omega u \omega'$. In order to apply the induction step we show that $v = \omega v' \omega'$ and $\abs v \geq t$. The precondition that $\omega$ and $\omega'$ are maximal $\Omega$-factors of $v$ is satisfied by \prref{lem:omegamax}.
			
			In the case of $\omega u \omega' \RA{T_\Omega} v$, some rule $\mu u' \mu' \to \mu \gamma(u') \mu' \in T_\Omega$ was applied. 
			As such rules preserve the prefixes and suffixes of length $3n$, the word $v$ must have the correct form. 
			In the case of $\omega u \omega' \RA{T_\Delta} v$, some rule $\delta^{t+n} \to \delta^t$ was applied. 
			Since $t>6n$ and elements of $\Omega$ all have length $3n$, the $\Omega$-factors $\omega$ and $\omega'$ are preserved by the application of the $\Delta$-rule $\delta^{t+n} \to \delta^t$.
			In both cases we conclude that $v = \omega v' \omega'$ for some word $v'$.
			
			It remains to show, that $\abs v \geq t$. Since $\abs{\delta^t} \geq t$, the case of an application of a rule in $T_\Delta$ is trivial. 
			Let $v$ stem from the application of a rule $\mu u' \mu' \to \mu \gamma(u') \mu' \in T_\Omega$.
			If either $\mu u'$ or $u' \mu'$ is a factor of $u$, we have that either $\mu \gamma(u')$ or $\gamma(u') \mu'$ is a factor of $v'$. 
			Thus, using $\abs{\gamma(u')} > t - 7n$ and $\abs{\omega} = 3n$ for every element $\omega\in \Omega$, we obtain \[\abs{v} = \abs{\omega v' \omega'} \geq \abs{\omega} + \abs{\mu \gamma(u')} + \abs{\omega'} > t + 2n > t.\]
			
			It remains to prove $\abs v \geq t$ for the situation which is depicted below.
			\begin{center}\begin{tikzpicture}[implies/.style={double,double equal sign distance,-implies}]
				\def\x{3.5}
				\def\y{9}
				\node [rectangle, thin, draw=black, minimum width = 1cm, inner sep = 1pt, minimum height=0.4cm] at (\x-0.25,1.48) 
				{$\omega\vphantom{\mu\delta^t}$};  
				\node [rectangle, thin, draw=black, minimum width = 2.5cm, inner sep = 1pt, minimum height=0.4cm] at (\x+1.5,1.48) 
				{$u\vphantom{\mu\delta^t}$};  
				\node [rectangle, thin, draw=black, minimum width = 1cm, inner sep = 1pt, minimum height=0.4cm] at (\x+3.25,1.48) 
				{$\omega'\vphantom{\mu\delta^t}$};

				\node [rectangle, thin, draw=black, inner sep = 1pt,minimum width = 1cm, minimum height=0.4cm] at (\x+0.25,1) {$\mu\vphantom{\mu\delta^t}$};
				\node [rectangle, thin, draw=black, inner sep = 1pt,minimum width = 1.5cm, minimum height=0.4cm] at (\x+1.5,1) {$u'\vphantom{\mu\delta^t}$};
				\node [rectangle, thin, draw=black, inner sep = 1pt,minimum width = 1cm, minimum height=0.4cm] at (\x+2.75,1) {$\mu'\vphantom{\mu\delta^t}$};
				
				\end{tikzpicture}
			\end{center}
			If $\omega \neq \omega'$, then there exists $b_1,b_2 \in K\setminus\oneset{c}$ such that $\omega = b_1 c^{3n-1}$ and $\omega' = b_2c^{3n-1}$. 
			However, as no element of $\Omega$ starts with the letter $c$, we can conclude $\omega = \mu$ and thus by $\mu' \preceq \mu$ we obtain $\omega' = \mu'$ by the same argument. 
			In this case we have $\omega u \omega' = \mu u' \mu'$ and henceforth $v = \omega \gamma(u) \omega'$. 
			The case that $\mu \neq \mu'$ is similar: $\omega'$ has no $c$ as prefix and thus $\mu' = \omega'$. Again, $\omega = \mu$ and $v = \omega \gamma(u) \omega'$ holds.
			Hence, we may assume $\omega = \omega'$ and $\mu = \mu'$. 
			
			Combining both overlaps, we obtain the following picture.
			\begin{center}\begin{tikzpicture}[implies/.style={double,double equal sign distance,-implies}]
				\def\x{3.5}
				\def\y{9}
				\node [rectangle, thin, draw=black, minimum width = 0.75cm, inner sep = 1pt] at (\x-0.375,1.48) 
				{$x\vphantom{\mu\delta^t}$};  
				\node [rectangle, thin, draw=black, minimum width = 2cm, inner sep = 1pt] at (\x+1,1.48) 
				{$\omega\vphantom{\mu\delta^t}$};  
				\node [rectangle, thin, draw=black, minimum width = 0.75cm, inner sep = 1pt] at (\x+2.375,1.48) 
				{$y\vphantom{\mu\delta^t}$};

				\node [rectangle, thin, draw=black, inner sep = 1pt,minimum width = 2cm] at (\x+0.25,1) {$\mu\vphantom{\mu\delta^t}$};
				\node [rectangle, thin, draw=black, inner sep = 1pt,minimum width = 1.5cm] at (\x+2,1) {$y'\vphantom{\mu\delta^t}$};

				\node [rectangle, thin, draw=black, inner sep = 1pt,minimum width = 1.5cm] at (\x,0.52) {$x'\vphantom{\mu\delta^t}$};
				\node [rectangle, thin, draw=black, inner sep = 1pt,minimum width = 2cm] at (\x+1.75,0.52) {$\mu\vphantom{\mu\delta^t}$};
				
				\end{tikzpicture}
			\end{center}
			In the notation of the picture above we have $u = y u' x$. Thus, $v = \omega y \gamma(u') x \omega$ and by $\gamma(u') > t - 7n$ and $\abs{\omega} = 3n$ it suffices to show $\abs{x'} = \abs{yx} \geq n$.
			By $\mu y' = x' \mu$ we have that $\mu$ is a factor of $x'^+$. 
			We conclude $x'\not\in \Delta$ which implies $\abs{x'} > n$.
			In summary, $v = \omega v' \omega'$ and $\abs v \geq t$ holds. If $\abs v \leq t_\Omega$, then we can directly apply the $T_\Omega$-rule with left side $v$. Else, $v$ must be reducible by \prref{lem:tOmegaexist} and we can apply induction. 
		\end{proof}
		
		Combining the previous lemmas we show that $T$ is locally confluent.
		\begin{lemma}\label{lem:parikhisconfl}
			$T$ is locally confluent.
		\end{lemma}
		\begin{proof}		
			Let $\ell \to r, \ell' \to r' \in T$ be two rules. We have to show that every overlap of the left sides of those rules resolves.
			The system $T_\Delta$ is locally confluent by \prref{lem:deltacrsystem}. 
			Hence, we may assume that $\ell \to r\in T_\Omega$. Let $\omega u \omega' = \ell$ and consequently $r = \omega \gamma(u) \omega'$. 
			Consider first the case that $\delta^{t+n} = \ell' \to r' \in T_\Delta$.
			If $\ell'$ is a factor of $\ell$, that is, if $\ell = x\ell' y$, then $\ell \RAS{}{T} x r' y \RAS{*}{T} r$ by \prref{lem:parikhnf}. 
			By definition of $\Omega$, the left side $\ell$ which contains an element of $\Omega$ cannot be a factor of $\delta^{t+n}$. 
			Hence, the system resolves in the case of factor critical pairs.				
			Consider thus the case of an overlap critical pair $x \ell = \ell' y$ (the case $x \ell' = \ell y$ is symmetric). 
			Since $\omega$ is no factor of $\delta^+$ and $t \geq 3n$ by definition, we have the following situation:
			\begin{center}\begin{tikzpicture}[implies/.style={double,double equal sign distance,-implies}]
				\def\x{3.5}
				\def\y{9}
				\node [rectangle, thin, draw=black, inner sep = 1pt,minimum width = 1cm] at (\x-1,1) {$\delta^{n}\vphantom{\mu\delta^t}$};
				\node [rectangle, thin, draw=black, inner sep = 1pt,minimum width = 2cm] at (\x+0.5,1) {$\delta^{t}\vphantom{\mu\delta^t}$};
				\node [rectangle, thin, draw=black, minimum width = 1cm, inner sep = 1pt] at (\x+1.5,1.48) 
				{$\omega\vphantom{\mu\delta^t}$};
				
				\node [rectangle, thin, draw=black, minimum width = 1.5cm, inner sep = 1pt] at (\x+2.75,1.48) 
				{$u\omega'\vphantom{\mu\delta^t}$};
				\end{tikzpicture}
			\end{center}
			Let $\delta^t = z_1z_2$ and $\omega = z_2z_3$ be the overlap, then
			\begin{align*}
			x\ell \RAS{}{T} xr = x \omega \gamma(u) \omega' = \delta^{t+n} z_3\gamma(u) \omega' \RAS{}{T} \delta^t z_3 \gamma(u) \omega' = z_1z_2z_3 \gamma(u) \omega'\\
			\ell'y \RAS{}{T} r'y = \delta^ty = z_1 \omega u \omega' \RAS{}{T} z_1 \omega \gamma(u) \omega' = z_1z_2z_3 \gamma(u) \omega'
			\end{align*}
			
			Consider the case that $\ell' \to r' \in T_\Omega$ and let $\ell' = \mu v \mu'$. Again, if $\ell' = x\ell y$, then $\ell' \RAS{}{T} x r y \RAS{*}{T} r'$ by \prref{lem:parikhnf}. Hence, by symmetry, it suffices to consider the case $x \ell = \ell' y$. If $\ell$ and $\ell'$ overlap at most $3n$ positions, 
			\begin{center}\begin{tikzpicture}[implies/.style={double,double equal sign distance,-implies}]
				\def\x{3.5}
				\def\y{9}
				\node [rectangle, thin, draw=black, inner sep = 1pt,minimum width = 1.5cm] at (\x-0.25,1) {$\mu u'$};
				\node [rectangle, thin, draw=black, inner sep = 1pt,minimum width = 1cm] at (\x+1,1) {$\mu'$};
				\node [rectangle, thin, draw=black, minimum width = 1cm, inner sep = 1pt] at (\x+1.5,1.47) 
				{$\omega\vphantom{\mu\delta^t}$};  
				\node [rectangle, thin, draw=black, minimum width = 1.5cm, inner sep = 1pt] at (\x+2.75,1.47) 
				{$u\omega'\vphantom{\mu\delta^t}$};
				\end{tikzpicture}
			\end{center}
			then the rules can be applied independently; 
			let again be $\mu' = z_1z_2$ and $\omega = z_2z_3$ be the overlap, then
			\begin{align*}
			x\ell \RAS{}{T} xr = x \omega \gamma(u) \omega' = \mu u' \mu' z_3\gamma(u) \omega' \RAS{}{T} \mu \gamma(u') \mu' z_3 \gamma(u) \omega' = \mu \gamma(u')z_1z_2z_3 \gamma(u) \omega'\\
			\ell'y \RAS{}{T} r'y = \mu \gamma(u') \mu' y = \mu \gamma(u') z_1 \omega u \omega' \RAS{}{T} \mu \gamma(u') z_1 \omega \gamma(u) \omega' = \mu \gamma(u')z_1z_2z_3 \gamma(u) \omega'
			\end{align*}
			and the system resolves in this case. 
			
			Hence, we assume that $\ell$ and $\ell'$ overlap more than $3n$ positions. In this case $\mu'$ is a factor of $\ell$ and $\omega$ is a factor of $\ell'$. 
			This implies that $\mu$ and $\omega'$ are maximal $\Omega$-factors of $x\ell = \ell' y = \mu u'' \omega'$. We conclude $x\ell \RAS{}{T} xr \RAS{*}{T} \mu \gamma(u'') \omega'$ and $\ell' y \RAS{}{T} r' y \RAS{*}{T} \mu \gamma(u'')\omega'$ 
			by \prref{lem:parikhnf}. 
		\end{proof}
		
		By construction, the system $T$ is $\varphi$-invariant and thus the system $$T' = \set{c\ell \to cr \in A^*\times A^*}{\ell \to r \in T}$$ is $\varphi$-invariant. By \prref{lem:tOmegaexist} the system $T$ is of finite index over $K^*$. We can apply \prref{lem:bastel} and obtain a $\varphi$-invariant Parikh-reducing Church-Rosser system $S$ of finite index over $A^*$. 
		This concludes the proof of the first part of \prref{thm:parikhcomgroup}.
		It remains to study the groups in $A^*\! /S$.
		As an intermediate step, we study the groups in $K^*\! /T$.
		\begin{lemma}\label{lem:groupinT}
			Let $H \subseteq K^*\! /T$ be a subsemigroup which is a group and identify $H$ with the corresponding elements in $\IRR_T(K^*)$.
			Then either there exists some $\delta \in \Delta$ such that $H \subseteq \os{\delta^{t},\ldots, \delta^{t+n-1}}$ is a cyclic group whose order is divisible by $n$ or there is an injective homomorphism $\eta : H \to \prod_{a\in A} \mathbb Z/\ord(\varphi(a))\mathbb Z$.
		\end{lemma}
		\begin{proof}
		Without loss of generality, we may assume that $H$ is non-trivial.
			Let $e^2=e \in H$ be the identity element of $H$. 
			Note that by definition of the rules $T$ and the set $\Omega$, the irreducible word of every word $w\in K^*\Omega K^*$ also contains an $\Omega$-factor. 
			Thus, by $ex = x$ and $x^{\abs H} = e$ for all $x\in H$ either all elements in $H\subseteq K^*\! /T$ contain some factor in $\Omega$ or none of the elements contains an $\Omega$-factor. All words $x\in H$ must have length at least $t-n > 2n$ by definition of the rules $T$.
			
			Let us first consider the case that none of the elements contain an $\Omega$-factor. 
			We show that there exists some $\delta \in \Delta$ such that for all $x\in H$ there exists $i\in \mathbb N$ such that $x=\delta^i$.  
			Let $u\delta^{t+n} v \RA{T_\Delta} u \delta^t v$ be an application of a rule in $T_\Delta$ and let $w\in J$ be a minimal factor of $u\delta^{t+n}v$ which is not in $F$. 
			By \prref{lem:minimalefactorgegenbsp} $\abs w \leq 2n$ and since $t > 2n$, 
			the factor $w$ is also a factor of $u \delta^t v$. 
			Thus, the number of factors in $J$ does not decrease by an application of a rule in $T_\Delta$.
			Consider any $x \in H$. Since the number of factors in $J$ does not decrease by some application of a rule in $T_\Delta$, $x^{\abs{H}+1} = x$ and no rule in $T_\Omega$ is applicable, we deduce that the number of factors in $J$ of $x^{\abs{H}+1}$ and $x$ is the same. In particular, this number is zero and we obtain $x\in F$ for all $x\in H$. 
			Next, we show that $x = \delta^i$ for some $\delta \in \Delta$. 
			Since $x\in F$ and $\Delta$ is closed under conjugation, there exists a primitive word $\delta \in \Delta$ and $i\in \mathbb N$ such that $x = \delta^i \delta'$ for some prefix $\delta'$ of $\delta$. In particular, $\abs{\delta}$ is a period of $x$. 
			Note that $i \geq 2$ since $\abs{x} > 2n$.
			Consider the word $x^2$.
			By the above, we obtain $x^2 \in F$, that is, again there exists a primitive word $\hat \delta \in \Delta$, a prefix $\hat \delta'$ of $\hat \delta$ and a number $j\geq 2$ such that $x^2 = \hat \delta^j \hat \delta'$. 
			Therefore, $\abs{\hat \delta}$ is a period of $x^2$ and, hence, also of~$x$. 
			Since $\abs x > 2n$, we may use \prref{thm:finewilf} and conclude that $\gcd(\abs{\delta}, |\hat\delta|)$ is a period of $x$. 
			Since $\delta$ is primitive, this implies $\gcd(\abs{\delta}, |\hat\delta|) = \abs{\delta}$. Since $\hat{\delta}$ is a prefix of $x$, this yields that $\hat{\delta}$ is a power of $\delta$ which implies $\delta = \hat \delta$ by primitivity of $\hat{\delta}$.
			In particular, $\abs{\delta}$ is a period of $x^2$ and $\delta'\delta$ is a prefix of $\delta^2$. 
			Since $\delta$ is primitive this implies that $\delta'$ is not a proper prefix of $\delta$ by \prref{lem:charprimitive} and 
			we conclude that for every $x\in H$ there exists $\delta \in \Delta$ and $i\in \mathbb N$ such that $x = \delta^i$. 
			Thus, consider $\delta_1^i, \delta_2^j \in H$ with $\delta_1\neq \delta_2$ primitive words in $\Delta$. 
			Again, $\abs{\delta_1}$ is a period of $\delta_1^i$ and there must exist a period $p\leq n$ of $\delta_1^i\delta_2^j \in F$. 
			By \prref{thm:finewilf} $\gcd(\abs{\delta_1},p)$ is a period of $\delta_1^2$.
			By primitivity of $\delta_1$, this yields that $\abs{\delta_1}$ is a divisor of $p$. In particular, since $p$ is a period of $\delta_1^i \delta_2^j$, this yields $\delta_1^i\delta_2^j = \delta_1^i\delta_1^k \delta_1'$ for some $k\geq 2$ and $\delta_1'$ a prefix of $\delta_1$. 
			Using \prref{thm:finewilf} again, we see that $\gcd(\abs{\delta_1},\abs{\delta_2})$ is a period of $\delta_2^j$, that is, $\abs{\delta_2}$ is a divisor of $\abs{\delta_1}$ by primitivity of $\delta_2$. By symmetry, this yields $\abs{\delta_1} = \abs{\delta_2}$ and thus $\delta_1 = \delta_2$.
			
			Fix some primitive word $\delta \in \Delta$ such that $H \subseteq \delta^+$. Since $ex = x$ for all $x\in H$ and the right side of rules in $T_\Delta$ have length at least $t$ and since $\delta^{t+n}$ is reducible, we conclude $H \subseteq \os{\delta^t,\ldots, \delta^{t+n-1}}$ and thus $H$ is a subgroup of the cyclic group $\os{\delta^t,\ldots, \delta^{t+n-1}}$ of order $n$ which finishes this case.

			The second case is that all words in $H$ contain an $\Omega$-factor. 
			Consider the maximal $\Omega$-factors of $e$ and factorize 
			$e = e_1 \omega e_2 \omega' e_3$ with $\omega, \omega' \in \Omega$ maximal for $e$ 
			such that $e_1\omega$ and $\omega' e_3$ contains no other maximal $\Omega$-factors of $e$. 
			Since $e^2 = e$, we conclude that $e_2$ is some normal form. 
			By $ex = x = xe$ for all $x\in H$ and \prref{lem:parikhnf}, 
			there must exist a factorization $x = e_1 \omega \hat x \omega' e_3$ such that $\hat x = \gamma(\hat x)$ is a normal form. 
			In particular, $\widehat{xy} = \gamma(\hat x \omega' e_3 e_1 \omega \hat y)$ by \prref{lem:parikhnf}.
			Consider the homomorphism $\psi : A^* \to \prod_{a\in A} \mathbb Z/\ord(\varphi(a))\mathbb Z$ 
			which counts the number of $a\in A$ modulo $\ord(a)$ and the function $\eta : H \to \prod_{a\in A} \mathbb Z/\ord(a)\mathbb Z$ 
			given by $\eta(x) = \psi(\hat x) \cdot \psi(\omega'e_3e_1\omega)$. 
			Note that $\psi(\widehat{xy}) = \psi(\hat x)\psi(\hat y) \psi(\omega'e_3e_1\omega)$ implies that
			$\eta$ is a homomorphism. 
			It holds $\eta(x) = \eta(y)$ if and only if $\psi(\hat x) = \psi(\hat y)$. 
			By definition of the normal forms $\gamma(\cdot)$, it holds $\psi(\hat x) = \psi(\hat y)$ if and only if $\hat x = \hat y$ 
			and therefore $\eta$ is injective.
		\end{proof}
		By \prref{lem:bastel}, we obtain that the subgroups in $A^*\! /S$ are isomorphic to subgroups of $B^*\! /R$ and $K^*\! /T$. 
		By induction, all groups in $B^*\! /R$ are  isomorphic to some subgroup of $\prod_{a\in A} \mathbb Z/\ord(\varphi(a))\mathbb Z$. 
		All groups in $K^*\! /T$ are either cyclic of order divisible by $n$ or isomorphic to some subgroup of $\prod_{a\in A} \mathbb Z/\ord(\varphi(a))\mathbb Z$ by \prref{lem:groupinT}. 
		However, since $n$ is defined as the least common multiple of $\ord(\varphi(a))$, 
		the cyclic group of order $n$ is a subgroup of $\prod_{a\in A} \mathbb Z/\ord(\varphi(a))\mathbb Z$. 
		This proves the statement.
	\end{proof}
	
	\subsection{Group languages over an alphabet of size two}\label{subsec:group2generator}
	The same technique as in \prref{subsec:commgrpcr} can be used to obtain Parikh-reducing Church-Rosser systems which factorize through homomorphisms $\varphi : \oneset{a,b}^* \to G$ for an arbitrary group $G$. We will only sketch the proof, as it is essentially the proof of \prref{thm:parikhcomgroup}. 
	\begin{theorem}
		Let $A = \oneset{a,b}$ be an alphabet of size two and let $\varphi : A^* \to G$ be a homomorphism into a finite group $G$. Then there exists a Parikh-reducing Church-Rosser system $S$ of finite index which factorizes through $\varphi$. All groups in $A^*/S$ are subgroups of $G$ or of $\mathbb Z/n\mathbb Z$ where $n$ is the exponent of $G$.
	\end{theorem}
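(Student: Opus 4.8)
The plan is to mirror the inductive architecture of the proof of \prref{thm:parikhcomgroup} as closely as possible, keeping the $T_\Delta$-rules entirely unchanged and only modifying the $\Omega$-rules, which are the sole place where commutativity of $G$ was used. Since $A = \oneset{a,b}$ has exactly two letters, fixing $c = b$ gives the one-letter alphabet $B = \oneset{a}$, for which the base case is immediate: the system $\os{a^n \to 1}$ with $n$ the exponent of $G$ yields $B^*\!/R \simeq \mathbb{Z}/n\mathbb{Z}$ and is trivially a Parikh-reducing Church-Rosser system. Thus $R$ is fixed and $K = \IRR_R(B^*)c = \set{a^i b}{0 \le i < n}$ becomes a finite alphabet. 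I would then construct $T \subseteq K^* \times K^*$ via $T_\Delta$ and a new set $T_\Omega$, verify that $T$ is a $\varphi$-invariant Parikh-reducing Church-Rosser system of finite index over $K^*$, and invoke \prref{lem:bastel} to lift $S = R \cup T'$ to $A^*$.

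The essential change is the normal form $\gamma$. In the commutative proof, $\gamma(u)$ recorded the Parikh image of $u$ coordinatewise modulo the orders of the letters; here I cannot sort letters, but I \emph{can} replace $u$ by a fixed shortest representative of its $\varphi$-class. Concretely, I would fix for each group element $g \in \varphi(K^*)$ a canonical word $\rho(g) \in K^*$ with $\varphi(\rho(g)) = g$, chosen to be minimal in some well-order and in particular short, and set $\gamma(u) = c^{3n}\,\rho(\varphi(u))\,c^{3n}$ (padding so that $\gamma(u)$ begins and ends with the block $c^{3n}$, exactly as before, so that the $\Omega$-factor bookkeeping of \prref{lem:omegamax} goes through verbatim). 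The three crucial algebraic identities used in \prref{lem:parikhnf} and \prref{lem:groupinT}, namely $\varphi(u) = \varphi(\gamma(u))$ and $\gamma(u\gamma(u')) = \gamma(uu') = \gamma(\gamma(u')u)$, hold because $\gamma(u)$ depends only on $\varphi(u)$ and $\varphi$ is a homomorphism; no commutativity is needed for these, only that $\gamma$ factors through $\varphi$. The Parikh-reducing property over $A^*$ is where I must be a little careful: I would argue that for $u$ long enough (length $\ge t-6n$, say), the canonical word $\rho(\varphi(u))$ together with its bounded $c$-padding is strictly shorter in the number of $c$'s — and more generally Parikh-smaller — than $u$, since $\abs{\rho(\varphi(u))}$ is bounded by a constant depending only on $G$ while $u$ grows; choosing the parameter $t$ large relative to this constant forces every coordinate of the Parikh image to drop or stay equal with at least one strict decrease.

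With $\gamma$ in place, \prref{lem:t0parikhexist}, \prref{lem:tOmegaexist}, and \prref{lem:omegamax} carry over without change, as they are purely combinatorial statements about $\Delta$, $\Omega$, and the preorder $\preceq$ on $K^{3n}$, none of which mention $G$. The confluence argument \prref{lem:parikhnf} and the local-confluence case analysis of \prref{lem:parikhisconfl} likewise transfer, since every step there relied only on the two displayed identities for $\gamma$ and on $\Omega$-maximality, both of which I have preserved. For the final statement on subgroups I would adapt \prref{lem:groupinT}: in the case where elements of a subgroup $H \subseteq K^*\!/T$ contain no $\Omega$-factor the Fine–Wilf argument is unchanged and yields a cyclic group inside $\os{\delta^t,\dots,\delta^{t+n-1}}$, hence a subgroup of $\mathbb{Z}/n\mathbb{Z}$; in the $\Omega$-factor case the map $\eta$ now sends $H$ into $G$ itself via $x \mapsto \varphi(\hat x)\,\varphi(\omega' e_3 e_1 \omega)$, which is an injective homomorphism because $\hat x = \rho(\varphi(\hat x))$ is determined by its $\varphi$-value, so $H$ embeds in $G$. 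Finally \prref{lem:bastel}\ref{enum:bastel:b} gives that every group in $A^*\!/S$ lies in $B^*\!/R \simeq \mathbb{Z}/n\mathbb{Z}$ or in $K^*\!/T$, and the latter are subgroups of $G$ or of $\mathbb{Z}/n\mathbb{Z}$, which is the claim.

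I expect the main obstacle to be verifying that the modified $T_\Omega$ is genuinely Parikh-reducing over $A^*$: in the commutative case the coordinatewise reduction modulo $\ord(\varphi(a))$ made this transparent, but replacing $u$ by an arbitrary canonical representative $\rho(\varphi(u))$ does not automatically decrease \emph{every} letter count. The fix is to ensure that the padding and the length threshold $t$ are set so that the dominant gain comes from the $c$-block (and, with only two letters, from a uniform bound on $\abs{\rho(\varphi(u))}$), guaranteeing $\abs{\gamma(u)}_x \le \abs{u}_x$ for each $x \in A$ with strict inequality somewhere; pinning down this inequality cleanly, rather than the routine combinatorial transfer of the confluence lemmas, is the real content of the two-letter case.
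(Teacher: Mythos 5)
Your overall architecture is exactly the paper's: fix $c=b$, take $R=\oneset{a^n\to 1}$, work over $K=\set{a^ib}{0\le i<n}$, build $T=T_\Delta\cup T_\Omega$, lift via \prref{lem:bastel}, and embed the $\Omega$-case subgroups into $G$ via $x\mapsto\varphi(\hat x)\varphi(\omega'e_3e_1\omega)$. The gap is in the one place you changed: the normal form $\gamma(u)=c^{3n}\rho(\varphi(u))c^{3n}$ with $\rho$ a \emph{shortest} representative. First, Parikh-reduction in the $a$-coordinate cannot be rescued by tuning $t$ or the padding, contrary to your proposed fix. Take $G=\mathbb Z/n\mathbb Z$ with $n\ge 4$ even, $\varphi(b)=1$, $\varphi(a)=n/2$, and $u=b^j$ with $j\equiv n-1 \pmod n$: then $\abs{u}_a=0$, but the shortest representative of $\varphi(u)$ is $ab^{n/2-1}$, not $b^{n-1}$, so the rule $\omega u\omega'\to\omega\gamma(u)\omega'$ strictly \emph{increases} the number of $a$'s no matter how large $t$ is. Length minimality only controls the $b$-count; what the paper does instead is choose $v_g$ with a \emph{minimal number of $a$'s} among representatives (together with the pigeonhole fact that fewer than $n$ occurrences of $a$ always suffice), and minimality is precisely what yields $\abs{v_{\varphi(u)}}_a\le\abs{u}_a$ for every $u$ in the class.

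Second, and more structurally, your claim that \prref{lem:omegamax} ``carries over without change'' is false: its proof is not purely combinatorial in $\Delta$, $\Omega$, $\preceq$, but uses the internal shape of the commutative normal form, namely that every $\Omega$-factor of $\gamma(\hat u)$ is some $\gamma_a(\hat u)=a^jc^{3n}$ and hence $\preceq$-minimal. An arbitrary representative $\rho(g)$ sandwiched between two $c$-blocks destroys this: if $\rho(g)$ contains two nearby letters of $K\setminus\oneset{c}$ (unavoidable, e.g., for $G=\mathbb Z/p\times\mathbb Z/p$, $\varphi(a)=(1,0)$, $\varphi(b)=(0,1)$, $g=(2,0)$), then $\gamma(\hat u)$ contains an $\Omega$-factor such as $(ab)(ab)c^{3n-2}$, which is \emph{not} $\preceq$-minimal. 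Applying the rule inside a word all of whose $\Omega$-factors lie in $Kc^{3n-1}$ then creates an $\Omega$-factor strictly larger than the flanking markers $\omega,\omega'$; these are no longer $\Omega$-maximal in the rewritten word, the $T_\Omega$-rule required in the induction of \prref{lem:parikhnf} does not exist, and the confluence proof collapses. This is exactly why the paper forces the rigid interleaved shape $v_g=b^{3n^2}v_1b^{3n^2}\cdots b^{3n^2}v_{n-1}b^{3n^2}$ with $v_i\in\set{ab^k,b^k}{1\le k\le n}$: its only possible $\Omega$-factor is of the $\preceq$-minimal form $ab^{3n^2-1}$. A related consequence you also miss: since the lengths of these normal forms vary by up to $n^2$ rather than $n$, the paper must rescale $\Delta=K^{\le n^2}$, $\Omega\subseteq K^{3n^2}$ and $t=n^2(3n+7)$ so that all of them fit in the window $(t-7n^2,t-6n^2)$ needed in the overlap case of \prref{lem:parikhnf}; keeping $T_\Delta$ ``entirely unchanged'' is incompatible with representatives whose lengths vary by more than $n$.
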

	\begin{proof}[Sketch of proof]
		Let $n$ be the exponent of $G$ and let $R = \oneset{a^n \to 1} \subseteq \oneset{a}^* \times \oneset{a}^*$ be the set of rules over the alphabet $\oneset{a}$. Set $K = \IRR_R(a^*)b = \set{a^ib}{0 \leq i < n}$. 
		In the remainder of the sketch, we have to construct a system over $K^*$. 
		As the set of short words we choose $\Delta = K^{\leq n^2} \setminus \oneset{1}$. The corresponding set of rules is $T_\Delta = \set{\delta^{t+n} \to \delta^t}{\delta \in \Delta}$ for $t = n^2(3n+7)$. Note that since $t > 2n^2$ the system $T_\Delta$ is confluent by \prref{lem:deltacrsystem}. 
		
		Let $F =  \bigcup_{\delta \in \Delta, i \in \mathbb N} \Factors(\delta^i)$ and set $\Omega = K^{3n^2} \setminus (bK^*\cup F)$. Choose a preorder $\preceq$ on $\Omega$ such that 
		\begin{itemize}
			\item $\omega, \eta \in\Omega$ with 
			$\omega \in K^*(K\setminus\os{b}) b^i, \eta \in K^*(K\setminus\os{b}) b^j$ and $i>j$ 
			implies $\omega \preceq \eta$.
			\item $\preceq$ is a total order on $\Omega \setminus Kb^{3n^2-1}$.
			\item $\omega, \eta \in \Omega\cap Kb^{3n^2-1}$ implies $\omega \preceq \eta$.
		\end{itemize}
		
		In order to complete the construction, it remains to choose the normal forms $v_g$. Note that every representation of $g\in G$ needs less than $n$ a's by the pigeonhole principle. Thus, for every $g\in G$ there exists a word $v_g = b^{3n^2} v_{1} b^{3n^2} \cdots b^{3n^2} v_{n-1} b^{3n^2} \in K^*$ with $\varphi(v_g) = g$ and $v_i \in \set{ab^k, b^k}{1\leq k \leq n}$. For every $g \in G$ we choose such a word $v_g$ such that the number of $a$'s is minimal. Note that by construction $\abs{\abs{v_g}-\abs{v_h}} < n^2$ as a word over $K$. This is the reason for the choice of $\Delta$. Furthermore, $t-7n^2 < \abs{v_g} < t-6n^2$, which explains the choice of the parameter $t$. 
		The choice of $v_g$ also yields that there are no $\Omega$-factors in $v_g$ apart from $ab^{3n^2}$, which is $\Omega$-minimal.
		
		Adapting the proof of \prref{lem:t0parikhexist}, we prove the existence of a number $t_0$ such that every word $v \in K^*$ of length at least $t_0$ has a factor $\delta^{t+n}$ for a $\delta \in \Delta$ or a factor $\omega \in \Omega$. \prref{lem:tOmegaexist} yields the existence of a number $t_\Omega$ such that every $v \in \IRR_{T_\Delta}(K^*)$ contains a factor $\omega \, u \, \omega'$ 
		with $\omega, \omega'$ being $\Omega$-maximal for this factor and $t< \abs{\omega u \omega'} \leq t_\Omega$. 
		Again, let \[T_\Omega = \set{\omega u \omega' \to \omega v_{\varphi(u)} \omega'}{t\leq \abs{\omega u \omega'} \leq t_\Omega \text{ and } \omega, \omega' \text{ are }\Omega\text{-maximal for } \omega u \omega'}\] and $T = T_\Delta \cup T_\Omega$. 
		We want to apply \prref{lem:bastel} to obtain a system $S \subseteq \oneset{a,b}^* \times \oneset{a,b}^*$. 
		Confluence of $T$ follows along the lines of \prref{lem:omegamax}, \prref{lem:parikhnf} and \prref{lem:parikhisconfl}, whereas the statement about the groups in $A^*/S$ is analogously to \prref{lem:groupinT}.
	\end{proof}
	
	\section{Beyond Groups}\label{sec:beyondgroups}
	In this section we apply local divisors in order to lift the construction of Church-Rosser systems for groups to the general case of monoids. Instead of directly constructing a system over $K = \IRR_R(B^*)c$, we obtain a system inductively by going over to the local divisor. This decreases the size of the monoid, but increases the size of alphabet. The first part of this theorem has been published in \cite{DiekertKRW15jacm}, whereas the second part is based on the use of Rees extensions, see \cite{DiekertKW12tcs, DiekertWalter16}.
	\begin{theorem}\label{thm:group2variety}
		Let $\varietyH$ be a group variety such that for every homomorphism $\varphi : A^* \to G$ for $G\in \varietyH$ there exists a Parikh-reducing Church-Rosser system $S$ of finite index which factorizes through $\varphi$.
		Let $\varphi : A^* \to M$ be a homomorphism with $M \in \varietyHline$.
		\begin{enumerate}
			\item\label{enum:group2variety:a} There exists a $\varphi$-invariant Parikh-reducing Church-Rosser system $S$ of finite index.
			\item\label{enum:group2variety:b}  If every homomorphism $\varphi : A^* \to G$ in a group $G \in \varietyH$ has a Church-Rosser representation in $\varietyHline$, then $A^*\! /S \in \varietyHline$.
		\end{enumerate}
	\end{theorem}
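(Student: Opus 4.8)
\textbf{Proof proposal for \prref{thm:group2variety}.}

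The plan is to run an induction on the alphabet size $\abs A$, exactly parallel to the proof of \prref{thm:parikhcomgroup}, but replacing the explicit normal-form construction over $K$ by a passage to the local divisor. For the base case $\abs A = 1$, say $A = \os c$, the monoid $\varphi(A^*)$ is a cyclic monoid, hence a commutative object whose group part lies in $\varietyH$; a system of the form $\os{c^{p+m} \to c^p}$ for the appropriate index $p$ and period $m$ is Parikh-reducing, confluent, and $\varphi$-invariant, giving \ref{enum:group2variety:a}, and its quotient monoid is the cyclic monoid itself, which settles \ref{enum:group2variety:b} for this case. For the inductive step I would fix $c \in A$, set $B = A \setminus \os c$, and obtain by induction a $\varphi$-invariant Parikh-reducing Church-Rosser system $R \subseteq B^*\times B^*$ of finite index with $B^*\!/R \in \varietyHline$. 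This is precisely the setup of \prref{lem:bastel}, so it remains to produce a finite-index Parikh-reducing Church-Rosser system $T \subseteq K^*\times K^*$ over the new alphabet $K = \IRR_R(B^*)c$ such that $T' = \set{c\ell \to cr}{\ell\to r\in T}$ is $\varphi$-invariant.

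The heart of the argument is the construction of $T$, and here I would use the local divisor in place of the abelian normal form. Consider the local divisor $M_c = cM \cap Mc$ of $M = \varphi(A^*)$ at $c$, with its multiplication $uc \circ cv = ucv$; since $c$ is not a unit (as $\abs A > 1$ forces $\abs{M_c} < \abs M$), this is a strictly smaller monoid. The letters of $K$ all end in $c$, so words over $K$ map naturally into $M_c$ via $\circ$, giving a homomorphism $\varphi_c : K^* \to M_c$ that is compatible with $\varphi$ in the sense that $\varphi(cw) = \varphi(c)\cdot(\text{image of }w)$ for $w \in K^*$. Because $M_c \in \varietyHline$ as well (local divisors of a monoid in $\varietyHline$ stay in $\varietyHline$, as every group in $M_c$ is already a group in $M$, hence in $\varietyH$), the induction hypothesis on the \emph{monoid size} applies to $\varphi_c : K^* \to M_c$ and yields a $\varphi_c$-invariant Parikh-reducing Church-Rosser system $T$ of finite index over $K^*$ with $K^*\!/T \in \varietyHline$. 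The $\varphi_c$-invariance of $T$ translates exactly into $\varphi$-invariance of $T'$, by the way $\circ$ encodes concatenation through $c$. Feeding $R$ and $T$ into \prref{lem:bastel} produces the desired $S = R \cup T'$, proving \ref{enum:group2variety:a}, and part \ref{enum:bastel:b} of that lemma tells us every group in $A^*\!/S$ lies in $B^*\!/R$ or $K^*\!/T$, both of which are in $\varietyHline$ by induction, which — once one checks the Rees-extension structure keeps the whole quotient in $\varietyHline$ — gives \ref{enum:group2variety:b}.

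There is a genuine subtlety in the induction scheme: the recursion proceeds on a \emph{pair} of parameters, the alphabet size and the monoid size, and passing to the local divisor trades an increase in alphabet (from $B$ to $K$, which is typically larger) for a strict decrease in monoid size. I would therefore set up the induction on $\abs M$ as the primary measure (with the alphabet reduction $A \mapsto B$ handled by \prref{lem:bastel} as an inner step that keeps $M$ fixed but shrinks $A$), and argue well-foundedness by noting that $\abs{M_c} < \abs M$ strictly. The main obstacle I anticipate is the bookkeeping required to guarantee that both the local-divisor quotient $K^*\!/T$ \emph{and} the assembled quotient $A^*\!/S$ remain inside $\varietyHline$: the statement of \prref{lem:bastel}\ref{enum:bastel:b} only controls the \emph{groups} occurring in $A^*\!/S$, whereas \ref{enum:group2variety:b} asserts membership of the entire monoid $A^*\!/S$ in $\varietyHline$. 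Since $\varietyHline$ is by definition the class of all monoids whose subgroups lie in $\varietyH$, the gap closes once one confirms that every group inside $A^*\!/S$ is a group inside $B^*\!/R$ or $K^*\!/T$ (hence in $\varietyH$) — but making this precise requires invoking the Rees-extension description of $A^*\!/S$ from \cite{DiekertKW12tcs,DiekertWalter16} and its closure properties, and that structural input is where the real work of part \ref{enum:group2variety:b} resides.
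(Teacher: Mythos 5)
Your overall architecture matches the paper's: a lexicographic induction with the monoid size as primary measure and the alphabet size as secondary, an alphabet-reduction step $A \mapsto B$ feeding \prref{lem:bastel}, the local divisor $M_{\varphi(c)}$ to produce the system $T$ over $K = \IRR_R(B^*)c$, and part two resolved via \prref{lem:bastel} together with the fact that $\varietyHline$ consists exactly of the monoids all of whose groups lie in $\varietyH$. However, there is a genuine gap at the one point where everything hinges: well-foundedness of the recursion. You write that $c$ is not a unit ``as $\abs A > 1$ forces $\abs{M_c} < \abs M$''. This is false: whether $\varphi(c)$ is a unit has nothing to do with the size of $A$. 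Take $\varphi : \os{a,b}^* \to \mathbb Z/2\mathbb Z$ sending both letters to the generator; then every $\varphi(c)$ is a unit, $M_{\varphi(c)} \cong M$, your primary measure does not decrease, and the recursion never terminates. This is not a repairable bookkeeping slip, because it is precisely the spot where the theorem's hypothesis on $\varietyH$ must enter --- and your proposal never invokes that hypothesis at all. The paper's induction takes ``$M$ is a group'' as its base case: there $M \in \varietyHline \cap \varietyG = \varietyH$, and the hypothesized Parikh-reducing systems for group homomorphisms are plugged in (this is also the only place where the extra hypothesis of part two is used). In the remaining case, since one may assume $\varphi$ surjective and $M$ is not a group, there exists a letter $c$ with $\varphi(c)$ not a unit (if all generators were units, $M$ would be a group), and only for such a $c$ does $\abs{M_{\varphi(c)}} < \abs{M}$ hold. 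Note that any argument of your shape, which never uses the hypothesis, would prove unconditionally that every homomorphism into any monoid of $\varietyHline$ --- in particular into an arbitrary finite group --- admits a finite Parikh-reducing Church-Rosser system; that is exactly the open question stated in the paper's conclusion, so no correct proof can have this form.

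A secondary inaccuracy: your justification that $M_c \in \varietyHline$ because ``every group in $M_c$ is already a group in $M$'' ignores that $M_c$ carries the different multiplication $uc \circ cv = ucv$, so a group in $(M_c,\circ)$ need not be a subsemigroup of $M$ at all. The correct reason, which the paper relies on, is that $M_c$ is a divisor of $M$ and $\varietyHline$ is a variety, hence closed under division.
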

	\begin{proof}
		\ref{enum:group2variety:a}. We use induction on $(\abs{M}, \abs{A})$, ordered lexicographically. Since $\varietyHline$ is closed 
		under taking submonoids, we can restrict ourselves on surjective homomorphisms $\varphi$. 
		If $M$ is a group, then $M\in \varietyH$ and there exists such a system $S$ by the preconditions. 
		Thus, we can assume that there is 
		a letter $c\in A$ such that $\varphi(c)$ is not a unit. Let $B = A\setminus \oneset{c}$. 
		By induction the restriction 
		\[
		\varphi|_{B^*} : B^*\to M
		\]
		admits a Parikh-reducing Church-Rosser system $R\subseteq B^*\times B^*$. 
		Consider the set 
		\[
		K = \IRR_R(B^*)c.
		\]
		This is a prefix code and will be considered as a new alphabet. 
		Let $\psi : K^* \to M_{\varphi(c)}$ be the homomorphism to the local divisor at $\varphi(c)$ induced via 
		$\psi(uc) = \varphi(cuc)$. We have $\abs{M_{\varphi(c)}}  < \abs{M}$ and $M_{\varphi(c)} \in \varietyHline$ and thus, by induction, there exists a 
		Parikh-reducing Church-Rosser system $T'\subseteq K^* \times K^*$ of finite index, 
		such that $T'$ factorizes through $\psi$.
		In particular, we have $\psi(\ell) = \psi(r)$ for a rule $(\ell, r) \in T'$. 
		We show that $\varphi(c\ell) = \varphi(cr)$. For this let $\ell = u_1c \ldots u_nc$ and $r = v_1c \ldots v_mc$. It holds
		\begin{align*}
		\varphi(c\ell) &= \varphi(cu_1c) \circ \ldots \circ \varphi(cu_nc)\\
		&= \psi(u_1c)\circ \ldots \circ \psi(u_nc) \\
		&= \psi(\ell) = \psi(r) \\
		&= \psi(v_1c)\circ \ldots \circ \psi(v_mc) \\
		&= \varphi(cv_1c) \circ \ldots \circ \varphi(cv_mc) = \varphi(cr).
		\end{align*}
		Hence, the rule $c\ell \to cr$ is $\varphi$-invariant. We set
		\[
		T = \set{c\ell \to cr}{\ell \to r \in T'}.
		\]
		The system $S = R\cup T$ has the required properties by \prref{lem:bastel}.
		
		\ref{enum:group2variety:b}. 
		The statement is clear if $M$ is a group. Consequently, the construction above is applied.
		By induction we may assume that $B^*\! /R, K^*\! /T \in \varietyHline$ and \prref{lem:bastel} implies that $A^*\! /S \in \varietyHline$.
	\end{proof}	
	A direct combination of \prref{thm:parikhcomgroup} and \prref{thm:group2variety} yields the following corollary.
	\begin{corollary}
		Let $M \in \overline{\Gcom}$ be a monoid and $\varphi : A^* \to M$ be a homomorphism, then there exists a Parikh-reducing Church-Rosser system $S\subseteq A^*\times A^*$ such that $S$ factorizes through $\varphi$ and $A^*\! /S \in \overline{\Gcom}$. 
		In particular, every language $L\subseteq A^*$ recognized by $\varphi$ is given as a finite union $L = \bigcup_{u\in L} [u]_S$.
	\end{corollary}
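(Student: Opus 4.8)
The plan is to instantiate \prref{thm:group2variety} at the group variety $\varietyH = \Gcom$ of finite abelian groups and to verify that both of its standing hypotheses are supplied verbatim by \prref{thm:parikhcomgroup}. First I would observe that, by the definition of $\varietyHline$ in the preliminaries, the choice $\varietyH = \Gcom$ yields $\varietyHline = \overline{\Gcom}$, since $\overline{\Gcom}$ is precisely the class of monoids all of whose subgroups are abelian. Thus the given monoid $M \in \overline{\Gcom}$ lies in $\varietyHline$, and the corollary is exactly \prref{thm:group2variety} read off for this particular variety.

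It then remains to discharge the two assumptions of \prref{thm:group2variety}. Its first hypothesis demands that every homomorphism from a free monoid into a group $G \in \Gcom$ admit a Parikh-reducing Church-Rosser system of finite index factorizing through it; this is exactly the first assertion of \prref{thm:parikhcomgroup}. For part \ref{enum:group2variety:b}, the additional requirement is that each such group homomorphism possess a Church-Rosser representation lying in $\varietyHline = \overline{\Gcom}$. Here I would invoke the second assertion of \prref{thm:parikhcomgroup}: the system it produces has all of its groups isomorphic to subgroups of $\prod_{a \in A} \mathbb{Z}/\ord(\varphi(a))\mathbb{Z}$. As this product is abelian, so are all of its subgroups, whence every group inside the representation $A^*\!/S$ is abelian, i.e.\ $A^*\!/S \in \overline{\Gcom}$. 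Both hypotheses therefore hold.

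With the hypotheses verified, applying \prref{thm:group2variety} to $\varphi : A^* \to M$ delivers a $\varphi$-invariant Parikh-reducing Church-Rosser system $S$ of finite index together with $A^*\!/S \in \overline{\Gcom}$, which is the main claim. For the concluding ``in particular'' statement I would use that $\varphi$ factorizing through $S$ makes the projection $\pi_S$ recognize any language $L$ recognized by $\varphi$ (the situation of \prref{fig:facthr}); consequently $[u]_S \subseteq L$ whenever $u \in L$, so that $L = \bigcup_{u \in L} [u]_S$, and this union is finite because $S$ has finite index.

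I do not expect a genuine obstacle, as the entire content resides in the two cited theorems and the argument is a clean composition of their conclusions. The only care needed is definitional: recognizing that $\overline{\Gcom}$ coincides with $\varietyHline$ for $\varietyH = \Gcom$, and that the ``subgroups of an abelian product'' conclusion of \prref{thm:parikhcomgroup} is exactly the representation-level statement that feeds hypothesis \ref{enum:group2variety:b} of \prref{thm:group2variety}.
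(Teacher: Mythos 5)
Your proposal is correct and follows exactly the route the paper intends: the paper presents this corollary as a ``direct combination of \prref{thm:parikhcomgroup} and \prref{thm:group2variety},'' which is precisely your instantiation of $\varietyH = \Gcom$, with the two hypotheses of \prref{thm:group2variety} discharged by the two assertions of \prref{thm:parikhcomgroup} (the subgroup-of-an-abelian-product conclusion giving $A^*\!/S \in \overline{\Gcom}$). Your handling of the ``in particular'' clause via recognition by $\pi_S$ and finite index is also the standard argument implicit in the paper's setup around \prref{fig:facthr}.
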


	In particular, \prref{thm:group2variety} shows that one can control the groups in the Church-Rosser representation. However, in general one may not preserve other properties, for instance, commutativity.   
	\begin{proposition}
		Let $\varphi : A^* \to \mathbb Z/2\mathbb Z$ be the homomorphism mapping each letter to the generator of $\mathbb Z/2\mathbb Z$. If $\abs{A}>1$, there is no abelian Church-Rosser representation of~$\varphi$.
	\end{proposition}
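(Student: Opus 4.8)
The plan is to argue by contradiction: assume there is a Parikh-reducing Church-Rosser system $S$ of finite index which factorizes through $\varphi$ and for which the monoid $A^*\!/S$ is commutative, and then produce an overlap that fails to resolve. Two elementary consequences of the hypotheses will be used throughout. First, since every letter is sent to the generator of $\mathbb Z/2\mathbb Z$, the value $\varphi(w)$ depends only on $\abs w \bmod 2$; hence each rule $(\ell,r)\in S$ satisfies $\abs\ell \equiv \abs r \pmod 2$, which together with the Parikh-reducing condition forces $\abs\ell - \abs r \geq 2$. Second, being Parikh-reducing, no rewrite step increases the number of occurrences of any fixed letter, so a word with no $b$ can only be rewritten into words with no $b$. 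A first corollary is that no single letter can be a left side --- its only Parikh-smaller, $\varphi$-equivalent candidate right side is the empty word $1$, which has the wrong $\varphi$-value --- so every letter is irreducible.

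Next I would fix two distinct letters $a,b\in A$, available because $\abs A>1$, and show that commutativity forces the rule $ab\to 1$ into $S$. Commutativity gives $ab \DAS{*}{S} ba$, so by confluence $ab$ and $ba$ have a common normal form. They are distinct words of length two, a reducible length-two word rewrites in one step to $1$, and a class contains a unique irreducible; the only possibility consistent with these facts is that $ab \RA{S} 1$ and $ba \RA{S} 1$, i.e.\ $ab\to 1\in S$. Consequently $[a]_S[b]_S=[ab]_S=[1]_S=[ba]_S=[b]_S[a]_S$, so $[a]_S$ is a unit of the finite monoid $A^*\!/S$. The units form a finite group, so $[a]_S$ has some finite order $d\geq 2$, and therefore $a^d \DAS{*}{S} 1$. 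Reducing $a^d$ to its normal form $1$ stays inside $\os{a}^{*}$ by the letter-count observation, and its terminal step applies a rule whose left side is a power of $a$ and whose right side is $1$; excluding $a\to 1$, this produces a rule $a^m\to 1\in S$ with $m\geq 2$.

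Finally I would play the rules $a^m\to 1$ and $ab\to 1$ against each other on the word $a^m b$: the prefix $a^m$ overlaps the factor $ab$ (the last $a$ together with the trailing $b$), giving the two one-step reducts $a^m b \RA{S} b$ and $a^m b \RA{S} a^{m-1}$. But $b$ is irreducible, whereas every descendant of $a^{m-1}$ lies in $\os{a}^{*}$ and so never contains the letter $b$; hence these reducts have no common descendant, contradicting confluence. The step I expect to be the main obstacle is the extraction of the rule $a^m\to 1$: commutativity and finiteness yield only the congruence $a^d \DAS{*}{S} 1$, and turning this into an \emph{actual} rule on $\os{a}$ relies essentially on the Parikh-reducing hypothesis to confine the whole reduction of $a^d$ to the one-letter alphabet and to guarantee that its last rewrite is a genuine rule $a^m\to 1$ rather than one that reintroduces other letters. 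The remaining steps are routine bookkeeping with lengths and $\varphi$-values.
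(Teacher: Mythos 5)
Your proof is correct, and its first half is the paper's: both use the parity constraint $\abs{\ell}\equiv\abs{r}\bmod 2$ on rules to conclude that letters are irreducible, and then force the rules $ab\to 1$ and $ba\to 1$ out of commutativity plus confluence, making $[a]_S$ a unit of the finite monoid $A^*\!/S$. The endgames genuinely differ. The paper stays in the quotient: $A^*\!/S$ is a group in which $a$ and $b=a^{-1}$ have even order $2n$; from $a^n=a^nb^nb^n=b^n$ it derives a contradiction because the common irreducible descendant of $a^n$ and $b^n$ must be a power $a^k$ or $b^k$ with $k<n$, violating the order. You drop back to the rule level: from the finite order $d$ of the unit $[a]_S$ you extract an explicit rule $a^m\to 1\in S$, and then the overlap $a^mb$, with one-step reducts $b$ and $a^{m-1}$, cannot resolve. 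Both endgames rest on the same confinement fact --- Parikh-reduction keeps every reduction of a power of $a$ inside $\os{a}^*$ --- which you isolate explicitly and correctly flag as the crux, while the paper compresses it into ``by the argumentation above''. This is also exactly the point where both arguments use Parikh-reduction rather than mere length-reduction: a length-reducing, parity-correct rule such as $a^3\to b$ would break the confinement, so neither proof covers that wider class without additional counting. Net effect: your route pays an extra step (extracting the terminal rule $a^m\to 1$) but localizes the contradiction in a single failed critical pair and never needs the evenness of the order, whereas the paper's order-arithmetic finish is shorter but leaves more implicit.
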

	\begin{proof}
		Assume that there exists a Church-Rosser system $S$ of finite index such that $A^*/S$ is abelian and there exists a homomorphism $\psi : A^*/S \to \mathbb Z/2\mathbb Z$ with $\varphi = \psi \circ \pi_S$. Let $a,b \in A$ be letters such that $a \neq b$. Since $S$ factorizes through $\varphi$, we have $\abs{r} \equiv \abs{\ell} \bmod 2$ for every rule $(\ell,r) \in S$ and it holds $a\neq b$ in $A^*/S$. Since $A^*/S$ is abelian, we obtain $ab = ba$ in $A^*/S$. In particular, $ab \to_S 1 \leftarrow_S ba$ and $A^*/S$ must be a group. Let $2n$ be the order of $a$ and $b$. Then $a^n = a^nb^nb^{n} = b^n$ holds in $A^*/S$ and thus there must be a irreducible word $w$ with $a^n \RAS{*}{S} w \LAS{*}{S} b^n$. By the argumentation above, there exists a number $k<n$ such that $w \in \oneset{a^k,b^k}$. Thus, either $a^{n-k} = 1$ or $b^{n-k} = 1$ which is a contradiction to the definition of $n$.
	\end{proof}
	\section{Complexity of Church-Rosser systems}\label{sec:crcomplexity}
	In this section we analyze the size of a Church-Rosser representation as constructed by \prref{thm:group2variety} and \prref{thm:parikhcomgroup}. We will restrict our analysis on the construction of the Parikh-reducing Church-Rosser representation. Similiar calculations can be made for the analysis of the size of the Church-Rosser system. 
	
	Before we prove upper bounds for the size of the constructed Church-Rosser systems, we reconsider the construction. 
	Our constructions used \prref{lem:bastel} as the basic building block of the construction. 
	Let $\varphi : A^* \to M$ be a homomorphism.
	For $B = A\setminus\os{c}$ and a system $R\subseteq B^*\times B^*$ one needs a system $T\subseteq K^* \times K^*$ for the alphabet $K=\IRR_R(B^*)c$.
	Now, unlike in the general case, we are able to reduce the alphabet itself by exploiting the structure of the alphabet. 
	Let $b_1\cdots b_k c \in K$ with $b_i \in B$ and $k>\abs{M}$. By the pigeonhole principle there exist $i<j$ such that $\varphi(b_1\cdots b_i) = \varphi(b_1\cdots b_j)$ and ${i+(k-j)\leq n}$. 
	Thus, we may introduce the subword-reducing\footnote{subword-reducing seen as a rule over $A^*$, not over $K^*$.} rule $b_1\cdots b_k c \to b_1\cdots b_i b_{j+1}\cdots b_k c$. 
	If $b_1\cdots b_i b_{j+1}\cdots b_k$ is reducible in $R$, reduce it further in $R$. Repeating this process yields a new alphabet for $K$ which is a subset of $B^{\leq n}c$ and therefore, if $\abs{B}>1$, has at most $(\abs{B}^{n+1}-1)/(\abs{B}-1)$ elements. One can check, that the proofs of 
	\prref{thm:group2variety} and \prref{thm:parikhcomgroup} also work adding this reduction technique of the alphabet $K$. 
	We refrained from directly adding it to the theorems, as they are already quite technical. 
	
	\begin{proposition}\label{prop:complexitygroup}
		Let $\varphi : A^* \to G$ be a homomorphism in $G\in \Gcom$, $n=\abs{G}$ and $m = \abs{A}>1$,  then there exists a Parikh-reducing Church-Rosser system $S$ such that $S$ factorizes through $\varphi$ and
		$$\abs{A^*\! /S} \in 2^{2^{m^{\mathcal O(n^2)}}}.$$
	\end{proposition}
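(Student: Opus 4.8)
The plan is to unwind the inductive construction of \prref{thm:parikhcomgroup}, enhanced by the alphabet reduction technique described above, and to bound the index contributed at each level of the induction on $\abs A$. Write $g(k)$ for the index $\abs{A^*\!/S}$ produced for an alphabet of size $k$. By the index formula of \prref{lem:bastel} one has
\[
g(k) = \abs{B^*\!/R} + \abs{B^*\!/R}^2\,\abs{K^*\!/T} = g(k-1) + g(k-1)^2\,\abs{K^*\!/T},
\]
with base value $g(1) \le n$, since in the one-letter case $S = \os{c^{n'}\to 1}$ where the construction parameter $n'$, the least common multiple of the orders $\ord(\varphi(a))$, divides and hence is at most $n=\abs G$ (for abelian $G$ the exponent divides the order). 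It therefore remains to bound $\abs{K^*\!/T}$ uniformly in $m$ and $n$, and then to solve the recursion.

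First I would bound the new alphabet: with the alphabet reduction technique $K \sse B^{\le n}c$, so $\abs K \le m^{n+1} = m^{\mathcal O(n)}$. Next I would bound $\abs{K^*\!/T}$. By \prref{lem:tOmegaexist} every word of length at least $t_\Omega$ is $T$-reducible, so every irreducible word is shorter than $t_\Omega$ and $\abs{K^*\!/T} \le \abs K^{t_\Omega}$. The crux is tracing the parameters: from the construction $t = \mathcal O(nm)$ and $t_0 = (t+n+3)(n+1) = \mathcal O(n^2 m)$, while the recursion $t_k = 2^k(t_0+t)-t$ evaluated at $k = \abs\Omega$ gives $t_\Omega = 2^{\abs\Omega}(t_0+t)-t$. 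Since $\Omega \sse K^{3n}$ we have $\abs\Omega \le \abs K^{3n} \le m^{3n(n+1)} = m^{\mathcal O(n^2)}$, hence $t_\Omega \le 2^{m^{\mathcal O(n^2)}}$. Combining,
\[
\log_2\abs{K^*\!/T} \le t_\Omega\,\log_2\abs K = 2^{m^{\mathcal O(n^2)}}\cdot \mathcal O(n\log m) = 2^{m^{\mathcal O(n^2)}},
\]
so that $\abs{K^*\!/T} \le 2^{2^{m^{\mathcal O(n^2)}}}$, the polynomial factor being absorbed into the exponent.

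Finally I would solve the recursion. Writing $C = 2^{2^{m^{\mathcal O(n^2)}}}$ for the uniform bound on $\abs{K^*\!/T}$ (the parameter $n=\abs G$ stays fixed and the alphabet size at every level is at most $m$), the recursion becomes $g(k) \le 2C\,g(k-1)^2$. Taking base-two logarithms turns this into the linear recursion $b_k \le D + 2b_{k-1}$ with $b_k = \log_2 g(k)$ and $D = 1 + \log_2 C = 1 + 2^{m^{\mathcal O(n^2)}}$. Unrolling gives $b_m \le 2^{m-1}(b_1 + D) = \mathcal O\!\left(2^m\cdot 2^{m^{\mathcal O(n^2)}}\right) = 2^{m^{\mathcal O(n^2)}}$, where the outer induction factor $2^m$ is harmlessly absorbed because $m \le m^{\mathcal O(n^2)}$. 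Exponentiating yields $g(m) = 2^{b_m} \le 2^{2^{m^{\mathcal O(n^2)}}}$, as claimed.

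I expect the main obstacle to be the bookkeeping of the nested exponentials rather than any conceptual difficulty. One must verify that the single genuine source of tower growth is the factor $2^{\abs\Omega}$ inside $t_\Omega$ (producing one exponential), that the word count $\abs K^{t_\Omega}$ lifts this to a double exponential, and that the polynomial parameters $t$, $t_0$, $\log\abs K$ together with the outer induction factor $2^m$ all disappear into the constant hidden by $\mathcal O(n^2)$ in the exponent, so that no further exponential level is created.
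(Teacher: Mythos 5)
Your proposal is correct and takes essentially the same route as the paper: the index recursion from \prref{lem:bastel}, the bound $\abs{K^*\!/T}\le\abs{K}^{t_\Omega}$ with $t_\Omega\le 2^{\abs{\Omega}}(t_0+t)$, $t_0+t\in\mathcal O(n^2m)$, and the alphabet reduction $\abs{K}\le m^{n+1}$ so that $\abs{\Omega}\le m^{\mathcal O(n^2)}$. The only difference is cosmetic: you solve the recursion by taking logarithms and unrolling $b_k\le D+2b_{k-1}$, while the paper proves the explicit bound $\ms(n,m)\le 2^{2^{m^{cn^2+2}}}$ by direct induction; the two are equivalent bookkeeping.
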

	\begin{proof}
		Let $S$ be the Parikh-reducing Church-Rosser system constructed using \prref{thm:parikhcomgroup} and the reduction technique described above.
		\prref{lem:bastel} shows that for $m>1$ it holds $$\abs{A^*\! /S} = \abs{B^*\! /R} + \abs{B^*\! /R}^2\cdot  \abs{K^*\! /T} \leq 2\abs{B^*\! /R}^2\cdot  \abs{K^*\! /T}$$ where $B=A\setminus \os{c}$. 
		In the case of \prref{thm:parikhcomgroup}, $R$ is constructed inductively whereas $T$ is constructed directly. 
		By \prref{lem:tOmegaexist}, every irreducible word in $\IRR_T(K^*)$ has length at most $t_\Omega$ and therefore $\abs{K^*\! /T} \leq \abs{K}^{t_\Omega}$. The construction of $t_\Omega$ in the proof of \prref{lem:tOmegaexist} shows that $t_\Omega \leq 2^{\abs{\Omega}}(t_0+t)$ whereas $t_0 + t \in \mathcal O(n^2m)$. Since $\Omega \subseteq K^{3n}$ we obtain $$\abs{K^*\! /T} \leq \abs{K}^{\mathcal O(n^2m) \cdot 2^{\abs{K}^{3n}}}.$$
		Using the alphabet reduction technique, we can assume $\abs{K} \leq m^{n+1}$. 
		Note that $\abs{K}^{3n} \leq  (m^{n+1})^{3n} = m^{(n+1)3n}$ does not yield another exponential jump. 
		A straightforward calculation yields the existence of a constant $c\in \mathbb N$ such that $$2\abs{K^*\! /T} \leq 2^{2^{m^{cn^2}}}.$$ 
		Now let $\ms(\varphi)$ denote the smallest size of a Parikh-reducing Church-Rosser representation of $\varphi$ and set 
		\begin{align*}
		\ms(n,m) = \max\set{\ms(\varphi)}{\varphi : A^* \to G, \abs{A}\leq m, G\in \Gcom, \abs{G}\leq n}
		\end{align*}
		to be the complexity over all possible homomorphisms with $\abs{A}\leq m$ and $\abs{G}\leq n$. 
		We have seen that the recursion
		\begin{align*}
		\ms(n,m) \leq \ms(n,m-1)^2 \cdot 2^{2^{m^{cn^2}}}
		\end{align*} 
		holds and show $\ms(n,m) \leq 2^{2^{m^{cn^2+2}}}$ inductively using this recursion. 
		Note that $\ms(n,1) = n$ and thus the inequality is true in the base case $m=2$. 
		Also $\ms(1,m) = 1$ and therefore we assume $n>1$.
		For $m>2$ and $n>1$ it holds
		\begin{align*}
		\ms(n,m) &\leq \ms(n,m-1)^2 \cdot 2^{2^{m^{cn^2}}}\\
		&\leq 2^{2^{(m-1)^{cn^2+2}+1}} \cdot 2^{2^{m^{cn^2}}}\\
		&= 2^{2^{(m-1)^{cn^2+2}+1} + 2^{m^{cn^2}}} \\
		&\leq 2^{2^{(m-1)^{cn^2+2}+1 + m^{cn^2}}}\\
		&\leq 2^{2^{m^{cn^2+2}}}.
		\end{align*}
		The last inequality holds since 
		\begin{align*}
		(m-1)^{cn^2+2}+1 + m^{cn^2} &\leq (m-1)m^{cn^2+1} + 1 + m^{cn^2}\\
		&= m^{cn^2+2} + m^{cn^2}\underbrace{(1-m)}_{<0} + 1 \\
		&\leq m^{cn^2+2}. \qedhere
		\end{align*}
	\end{proof}
	
	The triple exponential upper bound given by \prref{prop:complexitygroup} seems huge, however there is already a single exponential lower bound which is fairly easy to see. The lower bound comes from the fact that Church-Rosser systems cannot directly represent group identities which preserve length, such as commutation.
	\begin{proposition}\label{prop:lowerbound}
		For every $n\in \mathbb N$ there exists a homomorphism $\varphi : A^* \to G$ into an abelian group $G$ of size $n$ such that for every length-reducing Church-Rosser system $S$ which factorizes through $\varphi$ all words of length smaller than $n$ are irreducible, that is, $A^{<n} \subseteq \IRR_S(A^*)$. In particular, if $\abs A > 1$:
		$$\abs{A^*\! /S} \geq (\abs{A}^n-1)/(\abs{A}-1).$$
	\end{proposition}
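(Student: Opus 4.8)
The plan is to exhibit a single maximally rigid homomorphism and read the bound off from it. I would take $G = \mathbb Z/n\mathbb Z$, which is abelian of size $n$, and let $\varphi : A^* \to G$ send every letter to a fixed generator, so that $\varphi(w) = \abs{w} \bmod n$ depends only on the length of $w$. The point of this choice is that $\varphi$ retains no information beyond length modulo $n$, which is exactly what will force the rules of $S$ to be long. This construction works for an arbitrary alphabet $A$, which is what the stated lower bound (involving $\abs A$) requires.

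First I would observe that if $S$ factorizes through $\varphi$, then every rule $(\ell,r) \in S$ satisfies $\varphi(\ell) = \varphi(r)$, that is, $\abs{\ell} \equiv \abs{r} \pmod n$. Since $S$ is length-reducing we also have $\abs{\ell} > \abs{r}$, and combining the two statements forces $\abs{\ell} - \abs{r}$ to be a \emph{positive} multiple of $n$. Hence $\abs{\ell} - \abs{r} \geq n$, and in particular $\abs{\ell} \geq n$: every left-hand side of a rule has length at least $n$. The next step is then immediate, since a factor of a word is never longer than the word itself: a word of length smaller than $n$ cannot contain any left-hand side as a factor, so no rule applies and the word is irreducible. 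This establishes $A^{<n} \subseteq \IRR_S(A^*)$.

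For the index estimate I would use that a length-reducing system terminates and, being Church-Rosser, is confluent; hence each class of $\DAS{*}{S}$ contains a unique irreducible word and $\abs{A^*\! /S} = \abs{\IRR_S(A^*)}$. Counting words of length below $n$ over an alphabet with $\abs{A} > 1$ gives the geometric sum $\sum_{i=0}^{n-1}\abs{A}^i = (\abs{A}^n - 1)/(\abs{A} - 1)$, and since all of these are irreducible the claimed bound follows. There is no serious technical obstacle here; the conceptual heart of the argument — and the one step worth stating carefully — is the passage from the mod-$n$ congruence to the bound $\abs{\ell} \geq n$, where it is precisely length-reduction that upgrades ``the length difference is divisible by $n$'' to ``the length difference is at least $n$.''
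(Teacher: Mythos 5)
Your proposal is correct and follows essentially the same route as the paper: the same homomorphism onto a cyclic group of order $n$ sending all letters to one generator, and the same combination of the mod-$n$ length congruence with length-reduction (you phrase it as $\abs{\ell}\geq n$ for every rule, the paper as a contradiction when a short word is rewritten, which is the same argument). The index count via uniqueness of irreducible representatives also matches what the paper leaves implicit.
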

	\begin{proof}
		Consider the cyclic group $G$ of order $n$ and the homomorphism $\varphi : A^* \to G$ which maps all letters $a\in A$ to the same generator $g$ of $G$. Let $S \subseteq A^*\times A^*$ be a length-reducing Church-Rosser system which factorizes through $\varphi$. We show that every word of length less than $n$ is irreducible in $S$. Let $w \in A^*$ be a word with $\abs{w} < n$. 
		Assume that $w \RAS{}{S} v$ for some word $v$. Since $S$ is length-reducing, $\abs{v} < \abs{w}$. 
		However, $\varphi(w) = \varphi(v)$ implies $g^{\abs{w}-\abs{v}} = 1$. Since the order of $g$ is $n$, this is a contradiction to $0 < \abs{w}-\abs{v} < n$ and $w$ must be irreducible.
	\end{proof}
	Note that this proof does not use the Church-Rosser property and thus one could expect a larger size of the Church-Rosser representation. 
	
	\begin{example}
		Niemann and Waldmann constructed an explicit Parikh-reducing system $S$ for the case $\varphi : A^* \to \mathbb Z/2\mathbb Z$ with $\varphi(a) = 1$ for all $a\in A$ \cite{NiemannW02,NiemannPhD02}. Their system is given by $S = \set{xyz \to \max(x,z)}{x,y,z\in A, y = \min(x,y,z)}$ for some arbitrary order on~$A$. 
		The irreducible elements in $A^*\! /S$ are exactly the sequences which are first strictly increasing and then strictly decreasing, that is $$\IRR_S(A^*) = \set{a_1 \cdots a_i \cdots a_n}{a_1 < \cdots < a_i \geq a_{i+1} > \cdots > a_n}.$$ This yields $\abs{A^*\! /S} = \abs{\IRR_S(A^*)} = 1+ \sum_{i=1}^{\abs{A}} 2^{2i-1} = (2^{2\abs{A}+1}+1)/3$ which is significantly larger than the lower bound $\abs{A} + 1$ given in \prref{prop:lowerbound}.
	\end{example}
	In the monoid case, the minimal size of a Church-Rosser representation is bounded by a quadruple exponential function. This increase in complexity, compared to the group case, comes from the fact that, unlike in the group case, the system $T \subseteq K^* \times K^*$ is constructed by induction. However, this is also the reason that the alphabet reduction technique is even more powerful in this case. 
	Consider the function $f : \mathbb N^2 \to \mathbb N$ given by $f(1,m) = 1$, $f(n,1) = n$ and $f(n,m) = 2f(n,m-1)^2 \cdot f(n-1, f(n,m-1))$ for $n,m>1$. This function gives an upper bound for the maximal size of a Church-Rosser representation of a monoid of size $n$ and an alphabet of size $m$ without any optimization. Consider further the hyperoperation function $A_1(n) = 2n$, $A_k(1) = 2$ and $A_k(n) = A_{k-1}(A_k(n-1))$.\footnote{The notation $A$ comes from Ackermann, since the function $A$ is a modified Ackermann function.} 
	For fixed $k$, the function $A_k$ is primitive recursive, however the two-variable function $A$ grows faster than any primitive recursive function, see e.g. \cite{DavisWeyuker83}. An induction shows that $f(n,m) \geq A_{n-1}(m)$ for $n>1, m\geq 1$. 
	Hence, without the alphabet reduction the recursive formula would yield a non-primitive recursive function.
	\begin{proposition}
		Let $\varphi : A^* \to M$ be a homomorphism in $M\in \overline{\Gcom}$, $n=\abs{M}$ and $m = \abs{A}$. Then there exists a Parikh-reducing Church-Rosser system $S$ such that $S$ factorizes through $\varphi$ and
		$$\abs{A^*\! /S} \in 2^{2^{m^{\mathcal O\left((n+1)!\right)} + n}}.$$
	\end{proposition}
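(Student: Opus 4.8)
The plan is to run the local-divisor induction of \prref{thm:group2variety} together with the alphabet reduction technique and then to solve the resulting recursion, in the same style as the proof of \prref{prop:complexitygroup}. Writing $\ms(\varphi)$ for the smallest size of a Parikh-reducing Church-Rosser representation of $\varphi$, I set
\[
\ms(n,m) = \max\set{\ms(\varphi)}{\varphi : A^* \to M,\ \abs A \leq m,\ M\in\overline{\Gcom},\ \abs M \leq n}.
\]
The construction of \prref{thm:group2variety} proceeds by induction on $(\abs M, \abs A)$: one removes a letter $c$ with $\varphi(c)$ not a unit, obtains $R\subseteq B^*\times B^*$ for $B = A\setminus\os c$ recognizing the same monoid $M$, and then builds a system over $K = \IRR_R(B^*)c$ for the local divisor $M_{\varphi(c)}$, which lies in $\overline{\Gcom}$ and satisfies $\abs{M_{\varphi(c)}} < \abs M$. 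Applying the alphabet reduction I may assume $K \subseteq B^{\leq n}c$, so $\abs K \leq m^{n+1}$, and \prref{lem:bastel} gives the recursion
\[
\ms(n,m) \leq 2\,\ms(n,m-1)^2\cdot \ms(n-1,m^{n+1}),
\]
with the boundary cases $\ms(1,m) = 1$ and $\ms(n,1)\leq n$.

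Next I would isolate the base case of the monoid-size induction. Whenever $M$ is a group it is abelian, since $\overline{\Gcom}\cap\varietyG = \Gcom$, so \prref{prop:complexitygroup} yields $\ms(n,m)\leq 2^{2^{m^{cn^2}}}$ for a fixed constant $c$; this anchors the recursion. The decisive feature is the behaviour of the exponent of the inner power of $m$. If one writes $\ms(n,m)\leq 2^{2^{m^{\delta(n)}+\alpha(n)}}$, then squaring in the $\ms(n,m-1)^2$ branch only shifts the innermost exponent by an additive constant and does not change the degree, whereas the local-divisor branch replaces the alphabet size $m$ by $m^{n+1}$ and hence turns $m^{\delta(n-1)}$ into $m^{(n+1)\delta(n-1)}$. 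This forces the degree recursion $\delta(n) = (n+1)\,\delta(n-1)$, which unfolds, down to a group base case of size $s$ with $\delta_{\text{group}}(s) = c s^2$, to $\delta(n)\leq \frac{(n+1)!}{(s+1)!}\,c s^2 \in \mathcal O\!\left((n+1)!\right)$; this is the source of the factor $m^{\mathcal O((n+1)!)}$.

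Finally I would close the induction with the ansatz $\ms(n,m) \leq 2^{2^{m^{\delta(n)}+\alpha(n)}}$ for explicit $\delta(n)\in\mathcal O((n+1)!)$ and $\alpha(n)\in\mathcal O(n)$. Substituting the recursion one computes $\ms(n,m-1)^2 = 2^{2^{(m-1)^{\delta(n)}+\alpha(n)+1}}$ and $\ms(n-1,m^{n+1}) \leq 2^{2^{m^{\delta(n)}+\alpha(n-1)}}$, so that the inner exponent of $\ms(n,m)$ is bounded by $1 + 2^{(m-1)^{\delta(n)}+\alpha(n)+1} + 2^{m^{\delta(n)}+\alpha(n-1)}$; it then suffices to verify that $m^{\delta(n)}+\alpha(n)$ dominates this sum, which reduces to the elementary inequalities $m^{\delta(n)}-(m-1)^{\delta(n)}\geq 2$ and $\alpha(n)\geq \alpha(n-1)+1$. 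The second inequality is exactly what accumulates into the additive term $+n$ appearing in the claimed bound, as each descent to a local divisor contributes one unit from the factor $2$ in \prref{lem:bastel}. The main obstacle is precisely this bookkeeping: one must fix the constants hidden in $\delta(n)$ and $\alpha(n)$ uniformly so that both branches of the recursion and the group base case of \prref{prop:complexitygroup} are simultaneously absorbed, and in particular check that the alphabet blow-up $m\mapsto m^{n+1}$ triggers no further exponential jump once the reduction $\abs K \leq m^{n+1}$ is in force — this is what separates the quadruple-exponential bound from the non-primitive-recursive growth of the unoptimised function $f$.
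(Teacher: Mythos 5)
Your proposal follows the paper's proof essentially step for step: the same recursion $\ms(n,m) \leq 2\,\ms(n,m-1)^2\cdot \ms(n-1,m^{n+1})$ coming from \prref{thm:group2variety}, \prref{lem:bastel} and the alphabet reduction $\abs K \leq m^{n+1}$; the same base cases ($m=1$, and $M$ a group, which is abelian since $\overline{\Gcom}\cap\varietyG=\Gcom$, anchored by \prref{prop:complexitygroup}); and the same ansatz $2^{2^{m^{\delta(n)}+\alpha(n)}}$ with the degree recursion $\delta(n)=(n+1)\delta(n-1)$ producing the $(n+1)!$ and $\alpha(n)=n$ producing the additive $+n$. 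The paper simply takes $\delta(n)=c(n+1)!$ with $c$ large enough to absorb the group case, rather than your unfolded expression $\frac{(n+1)!}{(s+1)!}cs^2$; this difference is cosmetic.

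One arithmetic detail in your closing step is off by one: the conditions you declare sufficient, namely $m^{\delta(n)}-(m-1)^{\delta(n)}\geq 2$ and $\alpha(n)\geq\alpha(n-1)+1$, do not quite close the induction. With equality in both, the inner exponent is bounded only by $1 + 2^{m^{\delta(n)}+\alpha(n)-1} + 2^{m^{\delta(n)}+\alpha(n)-1} = 1 + 2^{m^{\delta(n)}+\alpha(n)}$, which exceeds the target $2^{m^{\delta(n)}+\alpha(n)}$ by $1$. You need the strict gap $m^{\delta(n)}-(m-1)^{\delta(n)}\geq 3$, so that the term coming from $\ms(n,m-1)^2$ sits strictly below half the target and the leading $1$ can be absorbed; this is precisely what the paper verifies via $(m-1)^{c(n+1)!}\leq (m-1)^2 m^{c(n+1)!-2} = m^{c(n+1)!}-(2m-1)m^{c(n+1)!-2}\leq m^{c(n+1)!}-3$ for $m,n>1$. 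Since $m^{\delta}-(m-1)^{\delta}\geq m^2-(m-1)^2 = 2m-1\geq 3$ for $m\geq 2$ and $\delta\geq 2$, the fix is immediate and the rest of your argument stands.
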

	
	\begin{proof}
		If $M \in \Gcom$, we know that there exists such a system $S$ with $\abs{A^*\! /S} \in 2^{2^{m^{\mathcal O(n^2)}}}$ by \prref{prop:complexitygroup}. 
		If $m=1$, then there exists a system $S$ such that $\abs{A^*\! /S} \leq n$. 
		In the other case we will use the local divisor construction of \prref{thm:group2variety}. 
		Note that by the alphabet reduction technique we may assume that $\abs{K} < m^{n+1}$. 
		
		Let $\ms(\varphi)$ denote the smallest size of a Parikh-reducing Church-Rosser representation of $\varphi$ and set 
		\begin{align*}
		\ms(n,m) = \max\set{\ms(\varphi)}{\varphi : A^* \to M, \abs{A}\leq m, M\in \overline{\Gcom}, \abs{M}\leq n}
		\end{align*}
		to be the complexity over all possible homomorphisms with $\abs{A}\leq m$ and $\abs{M}\leq n$. 
		
		The base cases are $m=1$ or $M$ is a group. For $m=1$ there exists a system of size~$n$. 
		In all other cases we have the following recursion formula for $\ms(n,m)$:
		\begin{align*}
		\ms(n,m) \leq 2\ms(n,m-1)^2 \cdot \ms(n-1,m^{n+1}).
		\end{align*}
		Note that $n>1$ since $M$ is not a group.
		Choose $c \in \mathbb N$ such that $\ms(n,m) \leq 2^{2^{m^{c(n+1)!}+n}}$ for all base cases.  This is possible since the group case is in $2^{2^{m^{\mathcal O(n^2)}}}$. We show that $$\ms(n,m) \leq 2^{2^{m^{c(n+1)!}+n}}$$ in general. 
		Inductively, it holds 
		\begin{align*}
		\ms(n,m) &\leq 2\ms(n,m-1)^2 \cdot \ms(n-1,m^{n+1})\\
		&\leq 2\cdot 2^{2^{(m-1)^{c(n+1)!}+n+1}} \cdot 2^{2^{(m^{n+1})^{cn!}+n-1}}\\
		&= 2^{1 + 2^{(m-1)^{c(n+1)!}+n+1} + 2^{m^{c(n+1)!}+n-1}} \\
		&\leq 2^{2^{m^{c(n+1)!}+n}}.
		\end{align*}
		The last inequality holds because for $n,m>1$
		\begin{align*}
		(m-1)^{c(n+1)!} &\leq (m-1)^2\cdot m^{c(n+1)!-2} \\
		&= m^{c(n+1)!} - (2m-1)m^{c(n+1)!-2} \\
		&\leq m^{c(n+1)!}-3
		\end{align*}
		and thus $(m-1)^{c(n+1)!}+n+1 < m^{c(n+1)!}+n-1$.
	\end{proof}
	
	\section{Conclusion}
	In this paper we introduced the notion of Parikh-reducing Church-Rosser representations. We were able to construct such representations in the case of languages in $\overline{\mathbf{Ab}}$ and for group languages over a two-element alphabet. Furthermore, we studied algebraic properties of such representations and the complexity of the corresponding systems. Several questions remain open as future work. Most importantly, 
	does there exist a finite Parikh-reducing Church-Rosser representation for every homomorphism into a finite group? Note that this already implies the case for every finite monoid by \prref{thm:group2variety}.
	Another interesting open question is which algebraic properties can be preserved by Church-Rosser representations. For example, it seems unlikely that every homomorphism into a finite group has a Church-Rosser representation which is a group again, although it may happen in some special cases.
	Additionally, there is a huge gap between our lower and upper bounds for the complexity. 
	Therefore it is interesting whether there are constructions for Church-Rosser representations which yield a better upper bound and what a good lower bound for the size of a Church-Rosser representation is.
	
\newcommand{\Ju}{Ju}\newcommand{\Ph}{Ph}\newcommand{\Th}{Th}\newcommand{\Ch}{Ch}\newcommand{\Yu}{Yu}\newcommand{\Zh}{Zh}\newcommand{\St}{St}\newcommand{\curlybraces}[1]{\{#1\}}

\end{document}